\newcommand{\ket}[1]{\left| #1 \right>} 
\newcommand{\bra}[1]{\left< #1 \right|} 
\newcommand{\ud}{\mathrm{d}}
\newcommand{\ue}{\mathrm{e}}
\newcommand{\M}{\mathrm{M}}
\newcommand{\vl}{\boldsymbol{l}}
\newcommand{\vy}{\boldsymbol{y}}
\newcommand{\cD}{{\mathcal D}}
\newcommand{\cE}{{\mathcal E}}
\newcommand{\cH}{{\mathcal H}}
\newcommand{\cV}{{\mathcal V}}
\newcommand{\rz}{{\mathbb R}}
\newcommand{\nz}{{\mathbb N}}
\newcommand{\kz}{{\mathbb C}}
\DeclareMathOperator{\tr}{Tr}
\DeclareMathOperator{\sgn}{sgn}
\newcommand{\eins}{\mathds{1}}
\numberwithin{equation}{section}
\newtheorem{theorem}{Theorem}[section]
\newtheorem{prop}[theorem]{Proposition}
\theoremstyle{definition}
\newtheorem{defn}[theorem]{Definition}
\newtheorem{rem}[theorem]{Remark}
\numberwithin{equation}{section}
\begin{document}

\thispagestyle{empty}

\begin{center}

{\LARGE\bf Many-particle quantum graphs and\\[5mm]
Bose-Einstein condensation} \\

\vspace*{3cm}

{\large Jens Bolte}%
\footnote{E-mail address: {\tt jens.bolte@rhul.ac.uk}}
{\large and Joachim Kerner}%
\footnote{E-mail address: {\tt j.kerner@rhul.ac.uk}}
\vspace*{2cm}

Department of Mathematics\\
Royal Holloway, University of London\\
Egham, TW20 0EX\\
United Kingdom\\

\end{center}

\vfill

\begin{abstract}
In this paper we propose quantum graphs as one-dimensional models with a
complex topology to study Bose-Einstein condensation and phase transitions in 
a rigorous way. We fist investigate non-interacting many-particle systems on
quantum graphs and provide a complete classification of systems that exhibit 
Bose-Einstein condensation. We then consider models of interacting particles 
that can be regarded as a generalisation of the well-known Tonks-Girardeau gas.
Here our principal result is that no phase transitions occur in bosonic systems
with repulsive hardcore interactions, indicating an absence of Bose-Einstein 
condensation. 
\end{abstract}

\newpage

\section{Introduction}
Bose-Einstein condensation (BEC) is a well established phenomenon in bosonic
many-particle systems. In its original version, a system of non-interacting 
particles in a three-dimensional box was studied and, below a critical 
temperature, condensation into a joint state (the one-particle ground state) 
was found. This is a simple, exactly solvable model for a phase transition. 
The effect, however, depends on the spatial dimension and is absent in 
lower-dimensional, immediate analogues. Examples of BEC in one dimension were 
subsequently found when boundary conditions lead to negative eigenvalues that 
survive the thermodynamical limit \cite{Wilde}. 

BEC in non-ideal gases is a much harder problem, as the Penrose-Onsager 
criterion \cite{PO56} requires a sufficient knowledge not only of the spectrum 
of the interacting many-particle Hamiltonian, but also of the eigenstates. As a 
consequence, results proving condensation in interacting systems and beyond 
mean-field approximations remain scarce \cite{LiebSeiringerProof,LVZ03}.

This paper is the third in a series of papers \cite{BKSingular,BKContact} 
devoted to the investigation of many-particle systems on general compact 
quantum graphs. Quantum graphs are models describing a quantum particle moving 
along the edges of a metric graph. They combine the simplicity of a 
one-dimensional model with the complexity of a graph and have become popular 
models in quantum chaos \cite{KS97,GNUSMY06}. Their spectra display 
correlations that can be well described with random matrix models. 

Models of many particles on a compact graph with singular interactions have 
been developed in the preceding papers \cite{BKSingular,BKContact}. Among them 
is an extension of the well-known Lieb-Liniger model \cite{LL63}, which 
incorporates two-particle $\delta$-interactions, to graphs. In this paper our 
aim is to explore to what extent BEC can or cannot occur in a one-dimensional 
system with complex structure. In a first stage we consider free Bose gases on 
graphs. Depending on the boundary conditions in the vertices the one-particle 
spectrum may or may not contain negative eigenvalues. Relevant for the 
occurrence of BEC are negative eigenvalues that remain negative in the
thermodynamical limit, which we realise in terms of increasing edge lengths.
We provide a complete classification of free Bose gases that display BEC.
In a second step we consider Tonks-Girardeau gases \cite{G60} on graphs as 
limits of Lieb-Liniger models. Our Tonks-Girardeau models describe bosons with 
repulsive hardcore interactions on the edges of a graph. We develop a 
Fermi-Bose map and prove that a Tonks-Girardeau gas is isospectral to a gas of 
free Fermions. This then finally proves the absence of BEC when hardcore 
interactions are switch on, even when BEC was present before.

The paper is organised as follows: In Section~\ref{1sec} we review basic
facts about BEC, quantum graphs, and many-particle systems on graphs.
Section~\ref{2sec} is devoted to the classification of free Bose gases
according to whether or not they display BEC. Tonks-Girardeau gases are
then studied in Section~\ref{3sec} via suitable Fermi-Bose maps, and the
absence of a phase transition is proven.

\section{Preliminaries}
\label{1sec}
In this section we briefly summarise relevant concepts of BEC as well as of 
many-particle quantum graphs. For more details on BEC see 
\cite{PO56,LSSI06,CCGOR11}, on quantum graphs see 
\cite{Kuc04,KS06,GNUSMY06,BolEnd09} and on many-particle quantum graphs see 
\cite{BKSingular,BKContact}.
\subsection{Bose-Einstein condensation}
For a gas of non-interacting bosons (a free gas), it is the macroscopic 
occupation of a one-particle eigenstate that implies the existence of BEC. This 
classical definition is meaningful since in non-interacting systems the 
eigenstates of the full Hamiltonian are (symmetrised) products of one-particle 
eigenstates. However, in the presence of interactions between particles, 
there is no preferred set of one-particle states and it is not immediately 
clear how to generalise this criterion for BEC. A suitable generalisation, 
introduced by Penrose and Onsager \cite{PO56}, is based on the reduced 
one-particle density matrix $\rho_{1}$. At inverse temperature 
$\beta=\frac{1}{T}$ and fixed particle number $N$, the canonical thermal 
density matrix $\rho_{N}$ of the full system is given by
\begin{equation}
\label{fulldensitymatrix}
 \rho_{N}=\frac{1}{Z_N(\beta)}\sum_{n} e^{-\beta E_{n}}\ket{\Psi_{n}}\bra{\Psi_{n}}\ ,
\end{equation}
where $\Psi_{n}$ is the $n$-th eigenvector of the $N$-particle system with 
eigenvalue $E_{n}$, and $Z_N(\beta)=\sum_{n}e^{-\beta E_{n}}$ is the canonical 
partition function. The reduced one-particle density matrix is obtained 
from \eqref{fulldensitymatrix} by tracing out $N-1$ particles, i.e.,
\begin{equation}\label{trace}
 \rho_{1}=N\tr_{2...N}{(\rho_{N})}\ .
\end{equation}
\begin{defn}[Penrose and Onsager]
\label{CritPO} 
Let $\rho_{N}$ be the canonical thermal state \eqref{fulldensitymatrix} of an 
$N$-particle system with one-particle reduced density matrix $\rho_{1}$ as 
in \eqref{trace}. The system is said to display BEC if the largest eigenvalue 
$\lambda_{\max}$ of $\rho_{1}$ satisfies 
\begin{equation}
\label{InequalityBEC}
 c_{1} < \frac{\lambda_{\max}}{N} < c_{2}, \ \ \forall N \geq N_{0}\ ,
\end{equation}
where $0<c_{1}\leq c_{2}$ and $N_{0}$ is sufficiently large. 
\end{defn}
Note that the limit $N \rightarrow \infty$ in Definition \ref{CritPO} is 
accompanied by the limit $V \rightarrow \infty$, where $V$ is the volume of 
the one-particle configuration space, such that the particle density remains 
fixed. This is the standard thermodynamical limit of the canonical ensemble 
\cite{GallavottiBook}. Unfortunately, although the criterion of Penrose and 
Onsager is very general, it is usually difficult to establish BEC rigorously 
in the sense of Definition \ref{CritPO}, since the eigenstates of the full 
system are hard to construct \cite{LSSI06}. On the other hand, it is of 
fundamental interest to understand how particle-particle interactions affect 
the occurrence of BEC \cite{PS86,AP87,BP86}. In this context a more tractable 
approach to BEC aims at identifying phase transitions, which are expected to
occur in any condensation process. This method has been used in 
\cite{BP86,P91,T90}, and it is this approach that will be used below to show 
the absence of BEC in the presence of repulsive hardcore contact interactions
on graphs.

When studying condensation it is more natural to drop the requirement of a
fixed particle number and to work in the grand-canonical ensemble. Then the 
free-energy density at finite volume is given by
\begin{equation}
\label{FreeEnergyDensity}
 f_{V}(\beta,\mu)=-\frac{1}{\beta V}\log Z(\beta,\mu)\ ,
\end{equation}
where
\begin{equation}
 Z(\beta,\mu)=\sum_{N=0}^{\infty}\ue^{N\mu\beta}Z_{N}(\beta)
\end{equation}
is the grand-canonical partition function, and $\mu$ is the chemical potential.
In the grand-canonical ensemble the thermodynamical limit is performed in terms
of the limit $V \rightarrow \infty$ \cite{GallavottiBook} alone. The chemical 
potential is then chosen such that the (fixed) particle density $\rho$ satisfies
\begin{equation}
\label{ConditionDensity}
 \rho = -\frac{\partial f}{\partial \mu}(\beta,\mu)\ ,
\end{equation}
where $f(\beta,\mu)=\lim_{V\to\infty}f_{V}(\beta,\mu)$. Inverting the relation 
\eqref{ConditionDensity} yields a function $\mu=\mu(\rho)$ that then allows 
to replace the chemical potential by the particle density. It may, however, 
happen that the relation \eqref{ConditionDensity} is not invertible for all 
values $\rho \in \rz^{+}$. This is the case in some well-known examples, 
including the three-dimensional free Bose gas and the one-dimensional Bose gas 
with a gap in the one-particle spectrum caused by boundary conditions. In such
a case it is necessary to choose a volume dependent sequence of chemical 
potentials such that the relation \eqref{ConditionDensity}, with 
$f(\beta,\mu)$ replaced by $f_{V}(\beta,\mu_{V})$, is fulfilled at any finite 
volume $V$ \cite{Wilde}. 

In general, a phase transition manifests itself in terms of points where a 
suitable thermodynamical function $g(\beta,\mu)=\lim_{V\to\infty}g_{V}(\beta,\mu)$
is not differentiable \cite{GallavottiMiracleSole,GallavottiBook,Wilde}. In 
the examples mentioned above it is well-known that phase transitions occur 
in this sense and, independently, that BEC occurs \cite{Wilde}. In fact, the 
BEC induces the phase transitions. As a consequence we adopt the point 
of view that an absence of a phase transition indicates an absence of BEC.

Below we shall consider systems of bosons on a graph, interacting via repulsive
hardcore interactions. The models are built after the example of $N$ particles
in one dimension whose interactions are described by the Lieb-Liniger 
Hamiltonian \cite{LL63}
\begin{equation}
\label{LiebLinigerHamiltonian}
 H^{\alpha}_{N}=-\sum_{j=1}^{N} \frac{\partial^{2}}{\partial x_{j}^{2}}+
 \alpha\sum_{i > j} \delta(x_{i}-x_{j})\ .
\end{equation}
In this example repulsive hardcore interactions are obtained by taking the 
limit $\alpha\to\infty$. This procedure leads to Dirichlet boundary conditions 
imposed whenever two coordinates coincide, $x_{i}=x_{j}$. Note that a Bose gas 
in one dimension with repulsive hardcore interactions is known as the 
Tonks-Girardeau gas \cite{YG05,CCGOR11}. The importance of this model lies in 
the fact that, although being a gas of bosons, it exhibits fermionic behaviour 
in various ways \cite{YG05,KWW04}. The origin of the Tonks-Girardeau gas is 
found in the calculation of the classical partition function of a 
one-dimensional gas of hard spheres with diameter $a > 0$ by Tonks 
\cite{Mattis}. Later Girardeau gave a quantum mechanical description of the gas
considered by Tonks and found a one-to-one mapping between a one-dimensional 
gas of bosons with hardcore interaction and a free gas of fermions \cite{G60}. 
Considering the case where $a=0$, Girardeau showed that the spectrum of a gas 
of bosons with hardcore repulsion is the same as the spectrum of a gas of free 
fermions, and the eigenfunctions are related by simple algebraic manipulations.
Let, e.g., $\psi^{(F)}_{0}(x_{1},...,x_{N})$ be the ground state of the gas of 
free fermions whose one-particle configuration space is an interval. The ground 
state $\psi^{(B)}_{0}(x_{1},...,x_{N})$ of a gas of bosons with hardcore point 
interactions is then obtained via the relation
\begin{equation}
\label{bosfermeigen} 
 \psi^{(B)}_{0}(x_{1},...,x_{N})=|\psi^{(F)}_{0}(x_{1},...,x_{N})|\ .
\end{equation}
Despite the close connection between a gas of free fermions and a gas of bosons
with hardcore interactions there are still some subtle differences in the 
eigenfunctions \eqref{bosfermeigen}, see \cite{YG05} for an overview of this 
so-called \textit{Fermi-Bose mapping}. 

For the original Tonks-Girardeau gas BEC was previously investigated 
in various papers \cite{S63,L64,L66,VT79,YG05}. As shown in \cite{PO56}, the 
existence of BEC in the sense of Definition \ref{CritPO} is equivalent to the 
existence of a long-range order in the position representation 
$\rho_{1}(x,x')$ of the reduced one-particle density matrix. For the 
Tonks-Girardeau gas at zero temperature it was shown that
\begin{equation}
 \rho_{1}(x,0) \sim \frac{K}{\sqrt{x}}, \quad x \to \infty\ ,
\end{equation}
where $K>0$ is a constant \cite{YG05,VT79}. This in turn implies that the 
maximal eigenvalue of \eqref{trace} is of order $\sqrt{N}$ and, hence, at zero 
temperature no BEC occurs\cite{YG05}. For finite temperature, however, in 
\cite{L66} an exponential decay of $\rho_{1}(x,x')$ was conjectured. This was
subsequently established in \cite{EL75} at low temperature.

It is interesting to note that studies of particles in one dimension 
interacting via repulsive hardcore interactions have a long history in 
statistical mechanics \cite{vanHove,GallavottiMiracleSole,GallavottiBook}. 
Van Hove, e.g., had shown (see Theorem~5.6.7 of \cite{Ruelle} and 
\cite{vanHove}) that a classical one-dimensional system with hardcore 
two-particle interactions, plus possibly a short-range contribution, shows no 
phase transitions. 
\subsection{Quantum graphs}
The classical configuration space for a particle on a graph is a compact
metric graph, i.e., a finite, connected graph $\Gamma = (\cV,\cE)$ with vertex 
set $\cV = \{v_1,\dots,v_V\}$ and edge set $\mathcal{E}=\{e_1,\dots,e_E\}$. 
The edges are identified with intervals $[0,l_e]$, $e=1,\dots,E$, thus 
assigning lengths to intervals; this then introduces a metric on the graph. 
Note that we do not exclude multiple edges or loops at this point.
\subsubsection{One-particle quantum graphs}
Functions on the graph are collections of functions on the edges,
i.e.,
\begin{equation}
 F=(f_1,\dots,f_E) \ ,\quad\text{with}\quad f_e : [0,l_e]\to\kz\ ,
\end{equation}
so that spaces of functions on $\Gamma$ are (finite) direct sums of
the respective spaces of functions on the edges. The most relevant
space is the Hilbert space
\begin{equation}
\label{HilbertSpaceI}
 \cH_1 = L^2 (\Gamma) := \bigoplus_{e=1}^E L^2 (0,l_e) \ ,
\end{equation}
and all other function spaces are constructed in a similar way.

Standard single particle quantum mechanics suggests to choose the one-particle
Hilbert space $\cH_1 =L^2 (\Gamma)$, see \eqref{HilbertSpaceI}. One-particle 
observables are then self-adjoint operators in $\cH_1$. In the absence
of external forces or gauge fields, the Hamiltonian should be a suitable 
realisation of a Laplacian. As a differential operator the Laplacian acts 
according to
\begin{equation}
 -\Delta_1 F= (-f_1'',\dots,-f_E'')
\end{equation}
on $F\in C^2 (\Gamma)$. We here use the index to indicate that this is a 
one-particle Laplacian.

Viewed as an operator in $L^2 (\Gamma)$ with domain $C_0^\infty (\Gamma)$, this 
Laplacian is symmetric, but not (essentially) self-adjoint. Its self-adjoint
extensions can be classified, and there are several ways to parametrise
these extensions (see, e.g., \cite{KS06,Kuc04}). All of these parametrisations
involve boundary values 
\begin{equation}
\label{fctbv}
 F_{bv} = (f_1(0),\dots,f_E(0),f_1(l_1),\dots,f_E(l_E))
\end{equation}
of the functions in the domain of the operator, as well of derivatives,
\begin{equation}
\label{derbv}
 F'_{bv} = (f'_1(0),\dots,f'_E(0),-f'_1(l_1),\dots,-f'_E(l_E))\ .
\end{equation}
One (unique) characterisation of self-adjoint extensions uses quadratic forms
\cite{Kuc04},
\begin{equation}
\label{quadform1}
 Q_1 [F] = \sum_{e=1}^E\int_0^{l_e}|f'(x)|^2\ \ud x -
 \langle F_{bv},L_1 F_{bv}\rangle_{\kz^{2E}}\ ,
\end{equation}
with domains
\begin{equation}
\label{1formdom}
 \cD_{Q_1} = \{F\in H^1(\Gamma); P_1 F_{bv}=0 \} \ .
\end{equation}
Here $P_1$ is a projector on the space of boundary values $\kz^{2E}$ and $L_1$ 
is a self-adjoint map on $\ker P_1$. (The index indicates that these quantities 
relate to one-particle quantities.) This form is uniquely associated with
a one-particle Laplacian $-\Delta_1$ on the domain
\begin{equation}
 \cD_1(P_1,L_1) = 
 \{F\in H^2(\Gamma); P_1 F_{bv}=0,\ Q_1 F'_{bv}+L_1 Q_1 F_{bv}=0 \} \ ,
\end{equation}
where $Q_1=\eins_{2E}-P_1$.

Any such one-particle Laplacian has a discrete spectrum, with eigenvalues
only accumulating at infinity and following a Weyl asymptotic law (see, e.g.,
\cite{BolEnd09}). Potentially, there are finitely many negative eigenvalues.
Their number is bounded by the number of positive eigenvalues of $L_1$ (in each
case including multiplicities) \cite{KS06}. Hence, a negative semi-definite 
map $L_1$ implies a non-negative Laplacian.
\subsubsection{Many-particle quantum graphs}
\label{subsec:manypart}
An $N$-particle quantum system on a graph requires the tensor product of
one-particle Hilbert spaces, $\cH_N = \cH_1 \otimes ... \otimes \cH_1$.
For a quantum graph this means that
\begin{equation}
\label{NHilbert}
 \cH_N = \Bigl(\bigoplus_{e=1}^E L^2 (0,l_e)\Bigr) \otimes ... \otimes
 \Bigl(\bigoplus_{e=1}^E L^2 (0,l_e)\Bigr) \ ,
\end{equation}
such that vectors $\Psi\in\cH_N$ are collections $\Psi = (\psi_{e_1 e_2...e_{N}})$ 
of $E^N$ functions defined on the hyper-rectangles 
$D_{e_1e_2...e_N}=(0,l_{e_1})\times(0,l_{e_2})\times...\times(0,l_{e_N})$. 
Their disjoint union is denoted as
\begin{equation}
\label{D_Gamdef}
 D_\Gamma^N = \dot{\bigcup_{e_1e_2...e_N}}D_{e_1e_2...e_N} \ ,
\end{equation}
so that one may view $\cH_N$ as
\begin{equation}
\label{HilbertN}
 L^2(D^{N}_\Gamma) = \bigoplus_{e_1e_2...e_{N}}L^2(D_{e_1e_2...e_N})  \ .
\end{equation}
The corresponding Sobolev spaces are introduced in the same way and are 
denoted as $H^m(D^{N}_\Gamma)$. 

A free Hamiltonian for $N$ particles is a lift of a one-particle Hamiltonian
$-\Delta_1$ to $\cH_N$, i.e.,
\begin{equation}
\label{HNfree}
 -\Delta_N^{free} = 
 \sum_{j=1}^N \eins\otimes\dots\otimes(-\Delta_1)\otimes\dots\eins\ ,
\end{equation}
where the (one-particle) Laplacian acts on the coordinates of the $j$-th
particle. As a differential expression, this operator is a Laplacian,
\begin{equation}
 (-\Delta_{N}\Psi)_{e_{1}...e_{N}} =
  -\left(\frac{\partial^{2}}{\partial x_{e_{1}}^{2}}+\dots
  +\frac{\partial^{2}}{\partial x_{e_{N}}^{2}}\right)\psi_{e_{1}...e_{N}}\ ,
\end{equation}
and it can be realised on a suitable domain in \eqref{HilbertN}. Hence, 
any free Hamiltonian \eqref{HNfree} is some extension of the symmetric operator
$-\Delta_N$ defined on the domain $C_0^\infty(D^{N}_\Gamma)$. 

However, there are extensions of $(-\Delta_N,C_0^\infty(D^{N}_\Gamma))$
that are not of the form \eqref{HNfree}; these operators necessarily contain
interactions among the particles. These interactions are induced by boundary
conditions imposed on the functions in their domain and hence are of a
singular type, acting when a particle hits a vertex. A certain class of such 
extensions was introduced in \cite{BKSingular}.

Another class of interactions, introduced in \cite{BKContact}, consists of 
two-particle contact interactions, acting whenever two particles are in the 
same position along an edge. They can be constructed as rigorous versions 
of the formal Hamiltonian
\begin{equation}
\label{formalH_N}
 H_N^\alpha = -\Delta_N +\alpha\sum_{i<j}\delta(x_i-x_j)\ .
\end{equation}
This requires to add additional boundaries to the configuration space
\eqref{D_Gamdef} along diagonal hyperplanes characterised by 
$x_i=x_j$.

A bosonic many-particle system has to be realised on the symmetric
tensor product of $N$ one-particle Hilbert spaces, $\cH_{N,B}=\Pi_B\cH_N$,
where
\begin{equation}
\label{BosonicSymmetry}
 (\Pi_{B}\Psi)_{e_{1}\dots e_{N}} := \frac{1}{N!}\sum_{\pi \in S_{N}}
 \psi_{\pi(e_{1})\dots\pi(e_{N})}(x_{\pi(e_{1})},\dots,x_{\pi(e_{N})})
\end{equation}
is the projector onto the totally symmetric states under particle exchange.

In order to realise the contact interactions indicated in \eqref{formalH_N} 
one has to introduce jump conditions across hyperplanes $x_i=x_j$ on the normal 
derivatives of the functions 
in the operator domain \cite{BKContact}. To achieve this one dissects the 
hyper-rectangles $D_{e_1e_2\dots e_N}$ describing configurations with at least two 
particles on the same edge (i.e., with at least a pair $i\neq j$ such that 
$e_i=e_j$) into polyhedral sub-domains by cutting $D_{e_1e_2\dots e_N}$ along the 
hyperplanes $x_i=x_j$. We denote the dissected version of $D_{e_1e_2\dots e_N}$ 
by $D^\ast_{e_1e_2\dots e_N}$, and the union of all these polyhedral domains by 
$D^{N*}_\Gamma$. The corresponding $L^2$-spaces are 
\begin{equation}
\label{HilbertSpaceDissected}
 L^2(D^{N\ast}_\Gamma)= \bigoplus_{e_1e_2\dots e_{N}}L^2(D^{\ast}_{e_1e_2\dots e_N})\ ,
\end{equation}
and similarly for Sobolev spaces. Given a dissected hyper-rectangle
$D^\ast_{e_1e_2\dots e_N}$, the union of all hyperplanes that form the internal 
boundaries will be denoted as $\partial D^{int}_{e_{1}\dots e_{N}}$, whereas the 
external boundaries $\partial D^{ext}_{e_1e_2\dots e_N}$ are the boundaries of the 
undissected hyper-rectangle $D_{e_1e_2\dots e_N}$. We then set
\begin{equation}
\label{intbound}
 \partial D^{N,ext/int}_\Gamma =
 \bigcup_{e_{1}\dots e_{N}}\partial D^{ext/int}_{e_{1}\dots e_{N}} \ .
\end{equation}
Implementing jump conditions across $\partial D^{N,int}_\Gamma$ will then yield 
the $\delta$-interactions \eqref{formalH_N}. 

One can also introduce boundary conditions on $\partial D^{N,ext}_\Gamma$ that
prevent the resulting $N$-particle Laplacian from being a free Hamiltonian
\cite{BKSingular}. These boundary conditions cause the interactions 
mentioned above to arise when one particle hits a vertex. In order to 
implement such interactions one needs the boundary values 
\begin{equation}
\label{BVI}
 \Psi_{bv}(\vy) = \begin{pmatrix} \sqrt{l_{e_{2}}\dots l_{e_{N}}} 
 \psi_{e_{1}\dots e_{N}}(0,l_{e_{2}}y_{1},\dots,l_{e_{N}}y_{N-1}) \\ 
 \sqrt{l_{e_{2}}\dots l_{e_{N}}} 
 \psi_{e_{1}\dots e_{N}}(l_{e_{1}},l_{e_{2}}y_{1},\dots,l_{e_{N}}y_{N-1}) \end{pmatrix} \ ,
\end{equation}
and
\begin{equation}
\label{BVII}
 \Psi^{'}_{bv}(\vy) = \begin{pmatrix} \sqrt{l_{e_{2}}\dots l_{e_{N}}}  
 \psi_{e_{1}\dots e_{N},x^{1}_{e_{1}}}(0,l_{e_{2}}y_{1},\dots,l_{e_{N}}y_{N-1}) \\ 
 -\sqrt{l_{e_{2}}\dots l_{e_{N}}} 
 \psi_{e_{1}\dots e_{N},x^{1}_{e_{1}}}(l_{e_{1}},l_{e_{2}}y_{1},\dots,l_{e_{N}}y_{N-1})  
 \end{pmatrix} \ ,
\end{equation}
where $\vy=(y_{1},\dots,y_{N-1})\in [0,1]^{N-1}$, of functions 
$\Psi\in H^1_B(D^N_\Gamma)$ and the normal derivatives. 
Acting on these (vertex related) boundary values are the bounded and 
measurable maps $P_{N},L_{N}: [0,1]^{N-1} \to \M(2E^{N},\kz)$ such that for 
a.e.\  $\vy\in [0,1]^{N-1}$ the linear map $P_{N}(\vy)$ is an orthogonal 
projector and $L_{N}(\vy)$ is a self-adjoint endomorphism on $\ker P_{N}(\vy)$. 
These maps define two bounded, self-adjoint multiplication operators $\Pi_{N}$ 
and $\Lambda_{N}$, respectively, on $L^{2}(0,1) \otimes \kz^{2E^{N}}$
\cite{BKContact}. Note that in \cite{BKSingular} it was shown that actual 
interactions are obtained whenever $P_{N},L_{N}$ are not all independent of 
$\vy$. 

Following \cite{BKContact}, the quadratic form for $N$ bosons on a graph 
subject to any of the interactions introduced above is   
\begin{equation}
\label{QuadFormContactI}
\begin{split}
 Q^{(N)}_{B}[\Psi] 
  &= N \sum_{e_{1}\dots e_{N}}\int_{0}^{l_{e_{1}}}\dots\int_{0}^{l_{e_{N}}} 
    |\psi_{e_{1}\dots e_{N},x_{e_1}}(x_{e_{1}},\dots,x_{e_{N}})|^{2}\ \ud x_{e_N}\dots
    \ud x_{e_1} \\
  &\quad -N\int_{[0,1]^{N-1}}\langle\Psi_{bv},L_{N}(\vy)\Psi_{bv} 
    \rangle_{\kz^{2E^{N}}} \ud\vy \\
  &\quad +\frac{N(N-1)}{2}\sum_{e_{2}...e_{N}}\int_{[0,1]^{N-1}} \alpha(y_1)\ 
    |\sqrt{l_{e_{2}}\dots l_{e_{N}}}\psi_{e_{2}e_{2}\dots e_{N}}
    (l_{e_2}y_1,\vl\vy)|^{2}\ \ud\vy \ ,
\end{split}
\end{equation}
where $\vl\vy=(l_{e_2}y_1,l_{e_3}y_2,\dots,l_{e_N}y_{N-1})$, with domain
\begin{equation}
\begin{split}
\label{QNformdomain}
 \cD_{Q^{(N)}_{B}} = \{\Psi \in H^{1}_{B}(D^{N\ast}_\Gamma);\ P_{N}(\vy)
 \Psi_{bv}(\vy)=0\ \text{for a.e.}\ \vy\in [0,1]^{N-1}\}\ .
\end{split}
\end{equation}
The second line on the right-hand side of \eqref{QuadFormContactI} implies the 
vertex-related interactions, whereas the third line yields contact
interactions with (bounded) variable strength $\alpha$. The latter can be turned
into a repulsive hardcore interaction by taking the limit $\alpha\to\infty$.
Taking this limit one has to amend the domain \eqref{QNformdomain} in that
the functions have to vanish on the internal boundaries 
$\partial D^{N,int}_\Gamma$. This subspace of $H^{1}_{B}(D^{N\ast}_\Gamma)$ shall be 
denoted as $H^{1}_{0,int,B}(D^{N\ast}_{\Gamma})$.

Under these conditions it was shown in \cite{BKContact} that the quadratic form
\eqref{QuadFormContactI} is closed and semi-bounded. Hence, there exists a
unique, self-adjoint and semi-bounded operator associated with this form.
This operator is a self-adjoint realisation of the $N$-particle Laplacian which 
will, in the case of hardcore interactions, be denoted by 
$(-\Delta_{N},\cD^{\alpha=\infty}_{N}(P_N,L_N))$. It has a discrete spectrum and the
eigenvalue asymptotics follows a Weyl law \cite{BKContact}.

\section{BEC in non-interacting Bose gases}
\label{2sec}
As described in  \eqref{HNfree}, a self-adjoint realisation of the free 
$N$-particle Laplacian follows from a tensor product construction  
based on a given one-particle Laplacian $(-\Delta_{1},\cD_1 (P_1,L_1))$.
The eigenfunctions $\{\Psi_{n}\}_{n \in \nz^N}$ of the $N$-particle Laplacian 
are then given as symmetrised products of the one-particle eigenfunctions 
$\{\phi_n\}_{n\in\nz}$, i.e.,
\begin{equation}
\Psi_{n}=\Pi_{B}(\phi_{n_{1}} \otimes\dots\otimes \phi_{n_{N}})  \ ,
\end{equation}
where $n=(n_1,\dots,n_N)$. The $N$-particle eigenvalues are 
$\lambda_{n}=k^{2}_{n_{1}}+\dots+k^{2}_{n_{N}}$, 
where $\{k^{2}_{n}\}_{n \in \nz}$ are the corresponding one-particle eigenvalues.

We shall mainly work in the grand-canonical ensemble where the thermodynamical 
limit is taken as the limit of an infinite volume of the one-particle 
configuration space (see Section~\ref{1sec}). For a graph the volume of the 
one-particle configuration space is the total length of the graph,  
\begin{equation}
\label{Length}
 \mathcal{L}=\sum_{e=1}^{E}l_{e} \ .
\end{equation}
In principle, the infinite-volume limit can be achieved by either increasing
the number of edges, or by scaling the lengths of the edges. As we do not want 
to change the topology of the graph, we here choose to leave the number of 
edges fixed and only increase the edge lengths by a common factor.
\begin{defn}
\label{ThermoLimit}
Let $\Gamma$ be a compact, metric graph with edge lengths
$l_1,\dots, l_E$. The {\it thermodynamical limit (TL)} consists of the 
scaling
\begin{equation}
\label{ScalingThermoLimit}
 l_{e} \mapsto \eta l_{e}\ , 
\end{equation}
and taking the limit $\eta\to\infty$. 
\end{defn}
We shall also use the notation
\begin{equation}
\lim_{TL} F(\mathcal{L})
\end{equation}
for the thermodynamical limit of a function $F(\mathcal{L})$.

According to general folklore, in the absence of disorder a free gas 
of bosons in one dimension shows no BEC at finite temperature. This, 
however, is only true if the one-particle spectrum has no gap separating a 
finite number of eigenvalues at the bottom of the spectrum from the (quasi-) 
continuum of states above. In order to generate such a gap in a quantum graph, 
it is necessary for the Laplacian $-\Delta_{1}$ to possess negative eigenvalues.

An upper bound for the number $n_-(-\Delta_1(P_1,L_1))$ of negative eigenvalues
of the one-particle Laplacian was proved in \cite{KS06}. The exact number
was later determined in \cite{BL10}, where the matrix
\begin{equation}
 M_{0}(l_1,\dots,l_E) := 
 \begin{pmatrix} m_{1}(l_1) & & 0 \\ &\ddots& \\ 0& &m_{E}(l_E)\end{pmatrix}\ ,
\end{equation}
with
\begin{equation}
 m_{e}(l_e) := \frac{1}{l_{e}}\begin{pmatrix}-1 & 1 \\ 1 & -1 \end{pmatrix}\ ,
\end{equation}
was introduced. It was then shown in \cite{BL10} that 
\begin{equation}
\label{numbernegeig}
 n_-(-\Delta_1(P_1,L_1)) = n_+(L_1+Q_1M_0 Q_1)\ ,
\end{equation}
where the right-hand side denotes the number of positive eigenvalues of
the linear map $L_1+Q_1M_0 Q_1$ on $\ker P_1\subseteq\kz^{2E}$ (and
$Q_1=\eins_{2E}-P_1$). Therefore, when the edge lengths tend to infinity in the 
TL, the quantity $n_-(-\Delta_1(P_1,L_1))$ approaches $n_+(L_1)$. This, however, 
does not imply that there are $n_+(L_1)$ negative Laplace-eigenvalues in the 
TL as some of the negative eigenvalues could approach zero in the limit. 
Nevertheless, for the question of BEC the number of positive eigenvalues of 
the map $L_1$ is still relevant, as we shall show below.

In the following Proposition we first prove absence of BEC when the domain
of the one-particle Laplacian is defined in terms of a negative semi-definite 
map $L_1$, and therefore the Laplacian has no negative eigenvalues.  
\begin{prop}
\label{NonInter}
Let $-\Delta_{N}$ be a bosonic, self-adjoint realisation of the free 
$N$-particle Laplacian. If the corresponding one-particle Laplacian
$(-\Delta_{1},\cD_1 (P_1,L_1))$ is such that $L_1$ is negative semi-definite, 
no Bose-Einstein condensation occurs at finite temperature in the 
thermodynamical limit. 
\end{prop}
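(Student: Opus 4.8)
The plan is to work in the grand-canonical ensemble and show directly that the largest eigenvalue of the reduced one-particle density matrix $\rho_1$ cannot grow linearly in the particle number. Since $L_1$ is negative semi-definite, the one-particle Laplacian $-\Delta_1(P_1,L_1)$ is non-negative, so its eigenvalues satisfy $0 \le k_0^2 \le k_1^2 \le \dots$, with a Weyl asymptotic law governing their growth. The $N$-particle eigenfunctions are symmetrised products $\Psi_n = \Pi_B(\phi_{n_1}\otimes\dots\otimes\phi_{n_N})$, so everything reduces to the combinatorics of Bose occupation numbers. First I would note that in the grand-canonical ensemble the expected occupation number of the one-particle state $\phi_j$ is the standard Bose factor
\begin{equation}
\label{BoseFactor}
 \langle N_j \rangle = \frac{1}{\ue^{\beta(k_j^2-\mu)}-1}\ ,
\end{equation}
and that the largest eigenvalue of $\rho_1$ is controlled by (in fact equals, in this non-interacting case, where $\rho_1$ is diagonal in the $\phi_j$ basis) the largest of the $\langle N_j\rangle$, namely $\langle N_0\rangle$.

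Next I would relate $\mu$ to the density $\rho$ via the constraint \eqref{ConditionDensity}: the total particle number is $\sum_{j\ge 0}\langle N_j\rangle$, and dividing by the volume $\mathcal L$ must give the fixed density $\rho$. The key point is to quantify how close $\mu$ must be pushed toward the bottom of the spectrum $k_0^2$. Because $-\Delta_1$ is non-negative and has a Weyl-law eigenvalue distribution, one can bound the sum over excited states: under the TL scaling $l_e \mapsto \eta l_e$ the eigenvalues behave like $k_j^2 \asymp j^2/\mathcal L^2$, so the "excited-state" sum $\sum_{j\ge 1}\langle N_j\rangle$ is, for $\mu$ near $k_0^2$, of order $\mathcal L$ (this is precisely the one-dimensional integral $\int \ud k\,(\ue^{\beta k^2}-1)^{-1}$ being convergent — the hallmark of the absence of a macroscopic reservoir in 1D with non-negative spectrum). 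Hence the excited states alone can already carry a density of order one, so $\mu$ need not approach $k_0^2$ faster than $O(1/\mathcal L)$; more precisely $k_0^2 - \mu \ge c/\mathcal L$ for some $c>0$ once $\mathcal L$ is large. Feeding this back into \eqref{BoseFactor} gives $\langle N_0\rangle = (\ue^{\beta(k_0^2-\mu)}-1)^{-1} = O(\mathcal L) = O(N)$ only up to the ambiguous regime — but in fact the bound $k_0^2-\mu\ge c/\mathcal L$ together with $\ue^x - 1 \ge x$ yields $\langle N_0\rangle \le \mathcal L/(\beta c) = O(N)$, while I must still rule out linear growth, i.e. exclude $\langle N_0 \rangle / N$ being bounded below by a positive constant. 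For that I would argue the reverse inequality: since the excited states already saturate the density (their contribution is asymptotically the full $\rho\mathcal L$ as $\mathcal L\to\infty$ when $\mu\to k_0^2$), the ground-state occupation is forced to be $o(N)$ — I would make this precise by a monotonicity/continuity argument in $\mu$, choosing at each finite volume the $\mu_{\mathcal L}$ solving the density constraint and showing $\langle N_0\rangle_{\mu_{\mathcal L}} = o(\mathcal L)$.

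The main obstacle, and the step requiring the most care, is the last one: controlling the precise rate at which $k_0^2 - \mu_{\mathcal L} \to 0$ and showing it is \emph{not} as fast as $1/\mathcal L$ (which would be needed for linear ground-state occupation). The clean way is to use that the one-particle spectrum, being non-negative with Weyl-law density of states, has an integrated density of states near the bottom that is $\Theta(\sqrt{E}\,\mathcal L)$ — the one-dimensional $\sqrt{\cdot}$ van Hove singularity — so the excited-state sum $g(\mathcal L,\mu) := \frac{1}{\mathcal L}\sum_{j\ge1}\langle N_j\rangle$ converges, as $\mathcal L\to\infty$ and $\mu\uparrow k_0^2$, to a finite limit $g_\infty(\beta)$ which can be made to equal or exceed any prescribed $\rho$ by the choice of $\beta$; when $\rho < g_\infty(\beta)$ the chemical potential stays bounded away from $k_0^2$ and $\langle N_0\rangle = O(1)$, and when $\rho \ge g_\infty(\beta)$ one still gets $k_0^2 - \mu_{\mathcal L}$ decaying only logarithmically or slower — in either case $\langle N_0\rangle/N \to 0$. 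Assembling these estimates, and invoking that for the free gas $\lambda_{\max}(\rho_1) = \langle N_0\rangle$, shows \eqref{InequalityBEC} fails, proving the absence of BEC.
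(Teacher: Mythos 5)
Your overall strategy (show that the grand-canonical ground-state occupation $\langle N_0\rangle$ is $o(\mathcal L)$ because the excited states form an unbounded reservoir in one dimension) is the right one, and it differs from the paper's route, which brackets the eigenvalue counting function between the explicit Dirichlet and Neumann cases, $\mathcal N_D(K)\leq\mathcal N(K)\leq\mathcal N_N(K)$ (using negative semi-definiteness of $L_1$), and then simply inherits the known absence of BEC from those solvable models. However, your argument contains a genuine error at its central analytic step: the integral $\int_0^\infty\bigl(\ue^{\beta k^2}-1\bigr)^{-1}\,\ud k$ is \emph{divergent} (the integrand behaves like $1/(\beta k^2)$ near $k=0$), and correspondingly $g_{1/2}(z)=\sum_k z^k/k^{1/2}\to\infty$ as $z\to 1$, so your claimed finite limit $g_\infty(\beta)$ of the excited-state density does not exist. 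This is not a cosmetic slip, because the convergence or divergence of exactly this quantity is what decides the matter: a \emph{finite} critical density (as in three dimensions) is precisely the mechanism that \emph{forces} condensation once $\rho$ exceeds it, so if $g_\infty(\beta)$ were finite as you assert, your case ``$\rho\geq g_\infty(\beta)$'' would produce BEC rather than the ``logarithmic or slower'' decay of $k_0^2-\mu_{\mathcal L}$ you claim without justification. The correct statement is the opposite of what you wrote: because $\rho_+(\beta,\mu)=\frac{1}{\sqrt{4\pi\beta}}g_{1/2}(\ue^{\beta\mu})\to\infty$ as $\mu\uparrow 0$, the density constraint $\rho_+(\beta,\mu)=\rho_0$ is solvable with $\mu$ bounded away from the bottom of the spectrum uniformly in $\mathcal L$, whence $\langle N_0\rangle=O(1)$ and only your first case ever occurs.

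Two smaller points. First, your intermediate bound ``$k_0^2-\mu\geq c/\mathcal L$ hence $\langle N_0\rangle=O(\mathcal L)$'' is a dead end, as you yourself note; once the divergence of $g_{1/2}$ at $z=1$ is in place you get the much stronger $\mu\leq\mu^\ast<0$ directly and the detour is unnecessary. Second, your use of $k_j^2\asymp j^2/\mathcal L^2$ for the low-lying spectrum needs justification beyond the Weyl law (which is an asymptotic statement in $j$, not a uniform two-sided bound near the bottom); the cleanest way to obtain it is exactly the Dirichlet--Neumann bracketing that the paper uses, which sandwiches every $k_j$ between explicit eigenvalues. With the divergence corrected and the bracketing supplied, your argument goes through and is essentially the standard textbook computation that the paper delegates to the Dirichlet and Neumann reference models.
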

\begin{proof} 
It is well known that there is no BEC in a gas of free bosons on an interval
of finite length with either Dirichlet or Neumann boundary conditions. In both
cases the eigenvalues are known explicitly and the standard argument applies. 
The same is true for a compact quantum graph with Dirichlet or Neumann 
boundary conditions: the eigenvalue equations on the edges decouple and, 
again, the eigenvalues are known explicitly. The spectrum, therefore, is the 
union of the spectra for each edge. 

Let
\begin{equation}
 \mathcal{N}(K)=\#\{n;\ k^{2}_{n} \leq K^{2} \}
\end{equation}
be the eigenvalue counting function for $(-\Delta_{1},\cD_1 (P_1,L_1))$,
where the eigenvalues $k_n^2$ are counted with their multiplicities,
and denote by $\mathcal{N}_{D/N}(K)$ the respective counting functions for 
the Dirichlet- and Neumann-Laplacian. A bracketing argument then implies
\begin{equation}
\label{inclusion}
 \mathcal{N}_{D}(K) \leq \mathcal{N}(K) \leq \mathcal{N}_{N}(K)\ .
\end{equation}
This follows, e.g., from the proof of Proposition 4.2 in \cite{BolEnd09}, 
taking into account that $L_1$ is negative semi-definite.

In the grand-canonical ensemble the expected number of particles can be 
expressed as
\begin{equation}
 N(\beta,\mu) = \sum_{n=0}^\infty \frac{1}{e^{\beta(k_n^{2}-\mu)}-1}
 =\int_0^\infty\frac{1}{\ue^{\beta(k^2 -\mu)}-1}\ \ud\mathcal{N}(k)\ .
\end{equation}
The relation \eqref{inclusion} hence implies that
\begin{equation}
\label{Nbracket}
 N_{D}(\beta,\mu) \leq N(\beta,\mu) \leq N_{N}(\beta,\mu) \ .
\end{equation}
In the TL the expected particle density is
\begin{equation}
\label{rhobrack}
 \rho(\beta,\mu)=\lim_{TL}\frac{N(\beta,\mu)}{\mathcal{L}}\ ,
\end{equation}
and \eqref{Nbracket} implies
\begin{equation}
 \rho_{D}(\beta,\mu) \leq \rho(\beta,\mu) \leq \rho_{N}(\beta,\mu) \ .
\end{equation}
Due to the explicit knowledge of the eigenvalues it is known that 
$\rho_{D}(\beta,\mu)=\rho_{N}(\beta,\mu)$. Thus, \eqref{rhobrack}
yields
\begin{equation}
 \rho(\beta,\mu) = \rho_{D/N}(\beta,\mu) \ .
\end{equation}
Therefore, as the Dirichlet- and Neumann case shows no BEC, the same holds for
any gas of free bosons satisfying the conditions of the Proposition.
\end{proof} 

It is known \cite{Wilde} that in dimension less than three, despite the general
folklore a free Bose gas may show BEC, if the spectrum of the one-particle 
Hamiltonian has a gap below zero. More precisely, if the one-particle
spectrum is such that there are infinitely many positive eigenvalues and,
say, one negative eigenvalue that remains negative after having taken the
TL, infinitely many particles will occupy the eigenstate
corresponding to the negative eigenvalue (the ground state). An example for 
this mechanism is given by an attractive delta-potential on the real axis.
This has exactly one negative eigenvalue and undergoes BEC at low 
temperatures \cite{DELTA76}. Furthermore, the condensate is spatially 
localised around the location of the delta-potential. It is worth mentioning 
that in the current context the real axis with a delta-potential at the origin 
is a quantum graph with two edges of infinite length and one vertex, at which 
appropriate boundary conditions are imposed. 

We now determine a class of quantum graphs that maintain a spectral gap below
zero in the thermodynamical limit. Our main tool will be a Rayleigh quotient,
\begin{equation}
 R[\Psi] = \frac{Q_1[\Psi]}{\|\Psi\|^2_{L^{2}(\Gamma)}}\ ,\qquad\Psi\in\cD_{Q_1}\ ,
\end{equation}
which is an upper bound for the ground state eigenvalue.
\begin{prop}
\label{PropGround}
Let $\Gamma$ be a compact, metric graph with a self-adjoint one-particle 
Laplacian $(-\Delta_{1}, \cD_{1}(P_1,L_1))$. Assume that $L_{1}$ has at least one
positive eigenvalue and denote the largest eigenvalue by $L_{\max}$. Then the 
ground state eigenvalue $-\kappa_{\min}^2<0$ of the one-particle Laplacian 
converges to $-L_{\max}^2<0$ in the thermodynamical limit.
\end{prop}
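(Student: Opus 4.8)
The plan is to prove matching upper and lower bounds for $-\kappa_{\min}^2$ that both converge to $-L_{\max}^2$ as $\eta\to\infty$. Throughout, I keep the combinatorial graph fixed and write $l_e\mapsto\eta l_e$, so $M_0(\eta l_1,\dots,\eta l_E)=\eta^{-1}M_0(l_1,\dots,l_E)$ and in particular $\|Q_1 M_0 Q_1\|\to 0$ as $\eta\to\infty$. First, the \emph{upper bound}: I test the Rayleigh quotient $R[\Psi]=Q_1[\Psi]/\|\Psi\|^2_{L^2(\Gamma)}$ on a carefully chosen trial function. Let $u\in\ker P_1\subseteq\kz^{2E}$ be a normalised eigenvector of $L_1$ with eigenvalue $L_{\max}>0$. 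I build $\Psi=(f_1,\dots,f_E)$ whose boundary-value vector $\Psi_{bv}$ equals $u$, by putting on each edge $e$ an exponential profile of the form $a_e\cosh(\kappa(x-l_e/2))+b_e\sinh(\kappa(x-l_e/2))$ with $\kappa$ close to $L_{\max}$, and choosing the constants $a_e,b_e$ so that the six boundary values $f_e(0),f_e(l_e)$ reproduce the corresponding components of $u$. Since $P_1 u=0$, this $\Psi$ lies in $\cD_{Q_1}$. For such an exponential profile one has the exact identity $\sum_e\int_0^{l_e}|f_e'|^2\,\ud x=\kappa^2\|\Psi\|^2+\kappa\langle\Psi_{bv},M_0(\eta l)\,\Psi_{bv}\rangle$ (integration by parts using $-f_e''=-\kappa^2 f_e$ together with the definition of $M_0$); hence $Q_1[\Psi]=\kappa^2\|\Psi\|^2+\langle\Psi_{bv},(\kappa M_0(\eta l)-L_1)\Psi_{bv}\rangle$. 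Choosing $\kappa=L_{\max}$ gives $Q_1[\Psi]=L_{\max}^2\|\Psi\|^2+\langle u,\kappa M_0(\eta l)u\rangle$ up to terms that vanish because $L_1 u=L_{\max}u$; dividing by $\|\Psi\|^2$ and estimating $\langle u, M_0(\eta l)u\rangle=O(1/\eta)$ while $\|\Psi\|^2$ stays bounded below (the profile has mass of order $1$ near each endpoint even as $l_e\to\infty$), I get $-\kappa_{\min}^2=\inf R\le -L_{\max}^2+O(1/\eta)$, so $\limsup_{\eta\to\infty}(-\kappa_{\min}^2)\le -L_{\max}^2$.

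Second, the \emph{lower bound}: I show $-\kappa_{\min}^2\ge -L_{\max}^2-o(1)$. The ground state eigenfunction $\Phi$ satisfies $-\Phi_e''=-\kappa_{\min}^2\Phi_e$ on each edge, so again it is a combination of $\cosh$ and $\sinh$ (or trigonometric functions, but for the ground state $\kappa_{\min}^2>0$ in the regime of interest), and the same integration-by-parts identity gives $-\kappa_{\min}^2\|\Phi\|^2=Q_1[\Phi]=-\kappa_{\min}^2\|\Phi\|^2-\langle\Phi_{bv},L_1\Phi_{bv}\rangle+\kappa_{\min}\langle\Phi_{bv},M_0(\eta l)\Phi_{bv}\rangle$ — wait, more usefully: the spectral condition $n_-(-\Delta_1)=n_+(L_1+Q_1 M_0(\eta l)Q_1)$ from \eqref{numbernegeig}, combined with the fact that $Q_1 M_0(\eta l)Q_1\to 0$ in operator norm, shows that for large $\eta$ the negative eigenvalues of $-\Delta_1$ are in bijection with the positive eigenvalues of $L_1$, and an Aronszajn–Weyl perturbation bound on $\ker P_1$ gives that each negative Laplace eigenvalue $-\kappa_j^2$ is close to $-(L_j)^2$ where $L_j$ is the corresponding eigenvalue of $L_1$; in particular the bottom one satisfies $\kappa_{\min}^2\le L_{\max}^2+\|Q_1 M_0(\eta l)Q_1\|\cdot(\text{const})$. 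Concretely I would substitute $\Psi=\Phi$ into the displayed $Q_1$-identity, use $\langle\Phi_{bv},L_1\Phi_{bv}\rangle\le L_{\max}\|\Phi_{bv}\|^2$ and $|\langle\Phi_{bv},M_0(\eta l)\Phi_{bv}\rangle|\le C\eta^{-1}\|\Phi_{bv}\|^2$, and then observe that for a ground state the boundary mass $\|\Phi_{bv}\|^2$ is controlled by $\kappa_{\min}\|\Phi\|^2$ (again from the exponential profile: the $L^2$-mass concentrates near the endpoints at rate $\kappa_{\min}$, so $\|\Phi_{bv}\|^2\asymp 2\kappa_{\min}\|\Phi\|^2$ in the relevant regime, with corrections exponentially small in $\eta\kappa_{\min}$). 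Plugging in yields $\kappa_{\min}^2\ge L_{\max}^2-C\eta^{-1}$, hence $\liminf_{\eta\to\infty}(-\kappa_{\min}^2)\ge -L_{\max}^2$. Combining the two bounds gives convergence.

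The main obstacle is the lower bound, specifically the claim that the boundary mass $\|\Phi_{bv}\|^2$ is \emph{comparable} to $\kappa_{\min}\|\Phi\|^2$ rather than merely bounded by it: one needs to exclude the degenerate possibility that the ground state has small $L^2$-norm concentrated on the edges but tiny boundary values, which would make the $\langle\Phi_{bv},L_1\Phi_{bv}\rangle$ term too weak to force $\kappa_{\min}$ up near $L_{\max}$. I would handle this by writing $\Phi_e$ explicitly as $A_e e^{\kappa_{\min}(x-l_e)}+B_e e^{-\kappa_{\min}x}$ (decaying from both ends), computing $\int_0^{l_e}|\Phi_e|^2$ and $|\Phi_e(0)|^2+|\Phi_e(l_e)|^2$ in closed form, and checking that their ratio equals $2\kappa_{\min}$ up to multiplicative factors $1+O(e^{-\kappa_{\min}l_e})$ — this is where the hypothesis that lengths grow in the TL is essential, as it makes the two endpoints effectively decouple. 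A secondary technical point is ruling out that $\kappa_{\min}^2$ stays bounded away from $L_{\max}^2$ from below by a fixed amount along a subsequence; this is exactly prevented by the upper bound already established, so the two halves of the argument genuinely interlock and together pin down the limit.
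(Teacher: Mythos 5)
Your two-sided strategy is the right one, and your upper bound for the ground-state eigenvalue is essentially the paper's argument: both of you test the Rayleigh quotient on a trial function supported near the vertices whose boundary-value vector is the $L_{\max}$-eigenvector of $L_1$ (the paper uses polynomial cut-offs $c_e(1-x/\lambda)^\alpha$ and optimises over $\lambda$ and $\alpha$; your exponential profile with decay rate $\kappa=L_{\max}$ achieves the same limit $-L_{\max}^2$). However, your computation rests on a claimed ``exact identity'' $\int|f'|^2=\kappa^2\|f\|^2+\kappa\langle F_{bv},M_0F_{bv}\rangle$ which is false. Integration by parts for $f''=\kappa^2 f$ gives $\int_0^{l}|f'|^2=\overline{f(l)}f'(l)-\overline{f(0)}f'(0)-\kappa^2\|f\|^2$; the boundary term for a $\kappa$-exponential is $\approx\kappa\|F_{bv}\|^2$, whereas $M_0$ is the Dirichlet-to-Neumann quadratic form of the \emph{harmonic} (zero-energy) extension, $\langle F_{bv},m_eF_{bv}\rangle=-|f(l)-f(0)|^2/l$, which vanishes identically on symmetric boundary data (e.g.\ $f=\cosh(\kappa(x-l/2))$) while the true boundary term does not. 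The conclusion survives because for decaying profiles the discrepancy is $O(\eta^{-1})$ anyway, but the identity as stated would not compile into a proof.

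The real gap is in your lower bound, which is also where you diverge from the paper. Your first route --- reading off eigenvalue \emph{locations} from the counting identity \eqref{numbernegeig} plus an unspecified ``Aronszajn--Weyl perturbation bound'' --- is circular: $n_-=n_+$ counts eigenvalues and says nothing about where they sit, and the asserted closeness of $-\kappa_j^2$ to $-L_j^2$ is precisely the statement to be proved. Your second, concrete route (explicit exponential ground state, $\langle\Phi_{bv},L_1\Phi_{bv}\rangle\le L_{\max}\|\Phi_{bv}\|^2$, and $\|\Phi_{bv}\|^2\asymp2\kappa_{\min}\|\Phi\|^2$) can be made to work, but only if you also retain the kinetic term $\int|\Phi'|^2\approx\kappa_{\min}^2\|\Phi\|^2$; dropping it and using only the comparability of boundary and bulk mass yields $\kappa_{\min}\le2L_{\max}$, off by a factor of two. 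You correctly identify the comparability claim as the delicate point but leave it, together with the needed a priori positive lower bound on $\kappa_{\min}$, as a sketch. The paper avoids all of this by simply quoting the explicit lower bound $-s^2$ of Kostrykin--Schrader, where $s\tanh(sl_{\min}/2)=L_{\max}$, which converges to $-L_{\max}^2$ as $l_{\min}\to\infty$ with no further work. I would recommend either citing that bound or fully executing your secular-equation computation; as written, the lower-bound half does not stand on its own.
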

\begin{proof}
As $L_1$ is assumed to possess at least one positive eigenvalue, 
$n_+(L_1)\geq 1$, the relation \eqref{numbernegeig} implies that the
Laplacian has at least one negative eigenvalue as long as the edge lengths 
are finite. Hence, for any $\Phi\in\cD_{Q_1}$,
\begin{equation}
\label{groundstateest}
 -s^2 \leq - \kappa_{\min}^2 \leq R[\Phi]\ .
\end{equation}
Here $-s^2$ is the lower bound for the spectrum of the one-particle Laplacian 
proved in \cite{KS06}, where $s$ a solution of
\begin{equation}
 s\tanh\left(\frac{sl_{\min}}{2}\right) = L_{\max}\ ,
\end{equation}
and is $l_{\min}$ the shortest edge-length. In the TL, where 
$l_{\min}\to\infty$, the lower bound in \eqref{groundstateest} converges to 
$-L_{\max}^2$. To find an upper bound in \eqref{groundstateest} we need to 
determine the Rayleigh quotient of a suitable trial function.

We assume that $P_{1}\neq\eins_{2E}$ as this would correspond to Dirichlet 
boundary conditions in the vertices, where it is known that there are no 
negative eigenvalues. Hence, there exists a non-trivial vector 
\begin{equation}
 v:=(c_1,\dots,c_E,c_{E+1},\dots,c_{2E})^T\in\ker P_1 \ .
\end{equation}
Using the components of such a vector, we now define a trial function $\Phi$ 
with components
\begin{equation}
 \phi_{e}(x) = \begin{cases} 
 c_e\bigl(1-\frac{x}{\lambda}\bigr)^\alpha &,\  x \leq\lambda \\ 
 0 &,\ \lambda\leq x \leq l_e-\lambda \\
 c_{e+E}\bigl(\frac{x}{\lambda}+1-\frac{l_e}{\lambda}\bigr)^\alpha &,\
 x \geq l_e -\lambda \end{cases}\ , \qquad \alpha\geq 1\ .
\end{equation}
As we shall take the TL, given any value for $\lambda$ we can arrange that 
$l_e\geq 2\lambda$ for all $e=1,\dots,E$. The boundary values of this 
function, therefore, are
\begin{equation}
 \Phi_{bv} = (c_1,\dots,c_E,c_{E+1},\dots,c_{2E})^T =v\in\ker P_1\ ,
\end{equation}
hence this function is in the domain \eqref{1formdom} of the quadratic form.

We now intend to estimate the Rayleigh quotient of $\Phi$, noting that we are 
free to choose $v\in\ker P_1$. The optimal choice for our purpose is to let
$v=\Phi_{bv} $ be an eigenvector of $L_1$ corresponding to its maximal 
eigenvalue $L_{\max}>0$. Then,
\begin{equation}
\begin{split}
 Q_1 [\Phi] 
 &= \sum_{e=1}^E\int_0^{l_e}|\phi'_e (x)|^2\ \ud x - 
    \langle\Phi_{bv},L_1\Phi_{bv}\rangle \\
 &= \frac{\alpha^2}{(2\alpha-1)\lambda}\sum_{e=1}^E 
    \bigl(|c_e|^2+|c_{e+E}|^2\bigr) - L_{\max}\|\Phi_{bv}\|^2_{\kz^{2E}} \\
 &= \left(\frac{\alpha^2}{(2\alpha-1)\lambda}-L_{\max}\right)\,
    \|\Phi_{bv}\|^2_{\kz^{2E}}\ .
\end{split}
\end{equation}
Moreover,
\begin{equation}
 \|\Phi\|^{2} = \sum_{e=1}^E\int_0^{l_e}|\phi_e (x)|^2\ \ud x 
 = \frac{\lambda}{2\alpha+1}\sum_{e=1}^E\big(|c_e|^2 +|c_{e+E}|^2\bigr) 
 = \frac{\lambda}{2\alpha+1}\,\|\Phi_{bv}\|^2_{\kz^{2E}} \ ,
\end{equation}
so that
\begin{equation}
 R[\Phi] = \left(\frac{\alpha^2}{(2\alpha-1)\lambda}-L_{\max}\right)\,
           \frac{2\alpha+1}{\lambda} \ .
\end{equation}
The right-hand side is negative when 
$\lambda > \frac{\alpha^2}{(2\alpha-1)L_{\max}}$ and has a minimum at 
$\lambda_{\min}=\frac{2\alpha^2}{(2\alpha-1)L_{\max}}$. With this optimal choice 
we find that
\begin{equation}
 R[\Phi] = -\frac{4\alpha^2-1}{4\alpha^2}\,L_{\max}^2\ .
\end{equation}
As $\alpha\geq 1$ can be chosen arbitrarily large in the TL, the optimal upper 
bound in \eqref{groundstateest} approaches $-L_{\max}^2$. Hence, 
$-\kappa_{\min}^2$ converges to $-L_{\max}^2$ in the TL.
\end{proof}
We are now in a position to state our main result of this section.
\begin{theorem}
\label{BECI}
Let a free Bose gas be given on a quantum graph with a one-particle 
Laplacian $(-\Delta_{1}, \cD_{1}(P_1,L_1))$ such that $L_1$ has at least one
positive eigenvalue. Then, in the thermodynamical limit, there is a critical 
temperature $T_c>0$ such that Bose-Einstein condensation occurs below $T_c$. 
\end{theorem}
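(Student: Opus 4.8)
The plan is to reduce the existence of BEC to a standard counting argument in the grand-canonical ensemble, exploiting the gap below zero established in Proposition~\ref{PropGround}. By assumption $n_+(L_1)\geq 1$, so by \eqref{numbernegeig} the one-particle Laplacian has $m\geq 1$ negative eigenvalues $-\kappa_1^2\leq\dots\leq-\kappa_m^2<0$ for all finite edge lengths, and Proposition~\ref{PropGround} guarantees that the lowest of them satisfies $-\kappa_{\min}^2=-\kappa_m^2\to-L_{\max}^2<0$ in the TL. The remaining (infinitely many) eigenvalues are non-negative and, by the Weyl law quoted earlier (and the Dirichlet/Neumann bracketing used in the proof of Proposition~\ref{NonInter}), the counting function of the \emph{non-negative} part of the spectrum behaves like that of an interval of length $\mathcal{L}=\eta\sum_e l_e$; in particular the integrated density of states of the non-negative part has a finite TL.

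Next I would split the expected particle number at chemical potential $\mu$ into the contribution of the negative eigenvalues and the rest,
\begin{equation}
 N(\beta,\mu)=\sum_{j=1}^m\frac{1}{\ue^{\beta(-\kappa_j^2-\mu)}-1}
 +\sum_{k_n^2\geq 0}\frac{1}{\ue^{\beta(k_n^2-\mu)}-1}\ .
\end{equation}
The second sum, divided by $\mathcal{L}$, converges in the TL to a finite, $\mu$-continuous function $\rho_{\geq0}(\beta,\mu)$ for every $\mu<0$, and its supremum over $\mu\in(-\infty,0)$ is a \emph{finite} number $\rho_c(\beta)$ — this is exactly the one-dimensional ``critical density'' phenomenon, and it is here that the bracketing estimate \eqref{inclusion} does the work. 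Now fix a particle density $\rho$. If $\rho>\rho_c(\beta)$ (equivalently, if $T$ is small enough that $\rho_c(\beta)<\rho$, which holds below some $T_c>0$ since $\rho_c(\beta)\to 0$ as $\beta\to\infty$), the relation $\rho=\lim_{TL}N(\beta,\mu_\eta)/\mathcal{L}$ can only be satisfied by letting $\mu_\eta\uparrow-L_{\max}^2$ from below at a rate tuned so that the ground-state term
\begin{equation}
 \frac{1}{\mathcal{L}}\cdot\frac{1}{\ue^{\beta(-\kappa_{\min}^2-\mu_\eta)}-1}
\end{equation}
tends to the macroscopic value $\rho-\rho_c(\beta)>0$. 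Choosing such a sequence $\mu_\eta$ (as the text notes is the standard device for gapped one-dimensional gases, cf.~\cite{Wilde}) forces a positive fraction of the particles into the one-particle ground state.

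Finally I would translate this into Definition~\ref{CritPO}. Since the gas is free, the reduced one-particle density matrix is diagonal in the one-particle eigenbasis with entries equal to the mean occupation numbers, so $\lambda_{\max}/N$ equals the ground-state occupation fraction computed above, which lies strictly between $0$ and $1$ for all large $\eta$ (hence all large $N$); this is precisely \eqref{InequalityBEC}. The critical temperature is $T_c=1/\beta_c$ with $\beta_c$ defined by $\rho_c(1/T_c)=\rho$. I expect the main obstacle to be the careful handling of the two simultaneous limits: one must show that, along the TL scaling $l_e\mapsto\eta l_e$, the sub-leading corrections to $\rho_{\geq0}(\beta,\mu_\eta)$ coming from the moving cutoff $\mu_\eta\to-L_{\max}^2$ do not spoil the split — i.e. that $\rho_{\geq0}$ is continuous up to (and its limit can be evaluated at) the edge $\mu=-L_{\max}^2$, and that only the single lowest eigenvalue, not the other $m-1$ negative ones, becomes macroscopically occupied (which is automatic if $L_{\max}$ is simple, and otherwise requires distributing the condensate over the corresponding eigenspace, still giving $\lambda_{\max}/N$ bounded below). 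Making the bracketing quantitative enough to control $\rho_{\geq 0}$ uniformly in $\eta$ near the band edge is the technical heart of the argument.
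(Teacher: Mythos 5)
Your overall strategy is the same as the paper's: use the spectral gap from Proposition~\ref{PropGround} to constrain the chemical potential, show that the density of particles in non-negative energy states saturates at a finite critical value that vanishes as $\beta\to\infty$, and conclude that below a critical temperature the surplus density must condense into the negative-energy ground state. The paper evaluates the excited-state density via the trace formula of \cite{BolEnd09} (only the Weyl term survives the TL, giving $\rho_+(\beta,\mu)=\tfrac{1}{\sqrt{4\pi\beta}}\,g_{1/2}(\ue^{\beta\mu})$), whereas you propose Dirichlet/Neumann bracketing; you also add a welcome explicit return to Definition~\ref{CritPO} and to the volume-dependent sequence $\mu_\eta$, which the paper leaves implicit. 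However, two points need fixing.

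First, your claim that the supremum of $\rho_{\geq 0}(\beta,\mu)$ over $\mu\in(-\infty,0)$ is finite is false, and taken literally it inverts the logic of the whole theorem: $g_{1/2}(z)\to\infty$ as $z\to 1^-$, so $\sup_{\mu<0}\rho_{\geq 0}(\beta,\mu)=+\infty$ for every $\beta$ --- this is exactly why the ungapped one-dimensional free gas shows \emph{no} BEC (Proposition~\ref{NonInter}). The finiteness of the critical density $\rho_c(\beta)=\tfrac{1}{\sqrt{4\pi\beta}}\,g_{1/2}(\ue^{-\beta L_{\max}^2})$ comes precisely from the constraint $\mu\leq -L_{\max}^2$ forced by the gap of Proposition~\ref{PropGround}; the supremum must be taken over that restricted range, not over all $\mu<0$. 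Your later steps ($\mu_\eta\uparrow -L_{\max}^2$, $\rho_c(\beta)\to 0$) are only consistent with the restricted supremum, so this is presumably a slip, but it sits at the crux of the argument. Second, the bracketing \eqref{inclusion} was established under the hypothesis that $L_1$ is negative semi-definite, which is exactly what fails here; the upper bound by the Neumann counting function does not hold when $L_1$ has a positive eigenvalue (the operator then has negative eigenvalues, which the Neumann Laplacian does not). To control the integrated density of states of the non-negative part uniformly along the TL you need either a Robin comparison operator with $L=M\eins$ (as in Proposition~\ref{Fermions}) together with the observation that the finitely many negative eigenvalues do not contribute to the density, or the trace formula route the paper actually takes. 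With these two repairs your argument goes through and coincides in substance with the paper's.
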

\begin{proof}
We denote the non-negative eigenvalues of the one-particle Laplacian (counted
with their multiplicities) as $k_0^2\leq k_1^2\leq k_2^2\leq k_3^2\leq\dots$.
In the grand-canonical ensemble, the expected particle number occupying
states of non-negative energy is
\begin{equation}
 N_+(\beta,\mu) = \sum_{n=0}^\infty \frac{1}{e^{\beta(k_n^{2}-\mu)}-1}\ .
\end{equation}
Recall that by Proposition~\ref{PropGround} the one-particle ground state 
eigenvalue is a distance $L_{\max}^2>0$ below zero. Hence the chemical potential 
$\mu$ has to satisfy $\mu\leq -L_{\max}^2$. The density of particles in states 
of non-negative energy in the TL then is
\begin{equation}
\label{NumberParticles}
 \rho_+(\beta,\mu)=\lim_{TL}\frac{N_+(\beta,\mu)}{\mathcal{L}}\ .
\end{equation}
In order to evaluate this expression we employ Proposition~$5.2$ of 
\cite{BolEnd09}, which provides a  preliminary form of the trace formula and can 
be rearranged as
\begin{equation}
\label{TraceFormula}
\begin{split}
 \sum_{n=0}^{\infty}h(k_{n}) 
  &= \frac{\mathcal{L}}{2\pi}\int_{-\infty}^\infty h(k)\ \ud k + \gamma h(0)
     - \frac{1}{4\pi}\int_{-\infty}^\infty h(k)\,s(k)\ \ud k \\
  &\quad + \sum_{l\neq 0} \frac{1}{4\pi i}\int_{-\infty}^{\infty} 
      \tr[\Lambda(k)U^{l}(k)]\,h(k)\ \ud k\ .
\end{split}
\end{equation}
Here $\gamma$ is a constant related to the multiplicity of the eigenvalue zero,
$\Lambda,U$ are matrix-valued functions involving the boundary conditions,
and $s$ is another function related to the boundary conditions. In this trace
formula, $h$ is a test function from a suitable test function space 
\cite{BolEnd09}. 

Now, choosing $h(k)=\frac{1}{\ue^{\beta(k^{2}-\mu)}-1}$, the left-hand side of 
\eqref{TraceFormula} is $N_+(\beta,\mu)$, and the right-hand side provides
four separate contributions to $N_+(\beta,\mu)$. It is obvious that the second
and the third term give no contributions to $\rho_+(\beta,\mu)$. An estimate
of the fourth term can be found in the proof of Theorem~$5.4$ in 
\cite{BolEnd09},
\begin{equation}
 \sum_{l\neq 0} \left|\int_{-\infty}^{\infty} \tr[\Lambda(k)U^{l}(k)]\,h(k)\ \ud k 
 \right| = O\bigl(\ue^{-\sigma l_{\min}}\bigr)\ ,
\end{equation}
where $l_{\min}$ is the shortest edge-length and $\sigma>0$. Hence, as in the 
thermodynamical limit $l_{\min}\to\infty$, this term, too, gives no contribution 
to $\rho_+(\beta,\mu)$. Therefore, the only non-vanishing contribution
comes from the first term (which also provides the Weyl term in the
asymptotics of the eigenvalue count),
\begin{equation}
 \rho_+(\beta,\mu) = \frac{1}{\pi}\int_0^\infty\frac{1}{\ue^{\beta(k^{2}-\mu)}-1}
 \ \ud k = \frac{1}{\sqrt{4\pi\beta}}\,g_{\frac{1}{2}}(\ue^{\beta\mu})\ .
\end{equation}
Here
\begin{equation}
 g_{\nu}(z) = \frac{1}{\Gamma(\nu)}\int_{0}^{\infty}
  \frac{x^{\nu-1}}{z^{-1}e^{x}-1}\ \ud x = \sum_{k=1}^\infty\frac{z^k}{k^\nu}
\end{equation}
is the well-known Bose-Einstein function (a polylogarithm). The series 
converges for $|z|<1$ and has a finite limit as $z\to 1$ when $\nu>1$. Here, 
however, $z=\ue^{\beta\mu}\leq\ue^{-\beta L_{\max}^2}<1$ as $\mu\leq-L_{\max}^2$. 
Hence, $\rho_+(\beta,\mu)$ is finite for all $\beta>0$ and tends to zero as 
$\beta\to\infty$ (i.e., $T\to 0$).

The total particle density, $\rho(\beta,\mu)$, also has a contribution 
from particles occupying states with negative energy,
\begin{equation}
 \rho(\beta,\mu) = \rho(\beta,\mu)_- + \rho(\beta,\mu)_+ \ .
\end{equation}
Given that the limiting particle density has a fixed value, $\rho_0$, below a 
certain critical temperature $T_c=\frac{1}{\beta_c}$ the negative energy states
must be populated because $\rho(\beta,\mu)_+<\rho_0$ when $\beta>\beta_c$.
This critical temperature is implicitly defined by
\begin{equation}
 \rho_0 =\frac{1}{\sqrt{4\pi\beta_c}}\,g_{\frac{1}{2}}(e^{-\beta_c L_{\max}^2})\ .
\end{equation}
More explicitly, when $T\leq T_c$,
\begin{equation}
 \rho_- = \rho_0 - \rho_+ = \rho_0\left( 
 1 - \frac{1}{\rho_0}\frac{1}{\sqrt{4\pi\beta}}\, 
 g_{\frac{1}{2}}(e^{-\beta L_{\max}^2})\right) \ .
\end{equation}
Hence, below the critical temperature the relative occupation of the negative 
energy states is
\begin{equation}
 \frac{\rho_-}{\rho_0} =  1 - \sqrt{\frac{\beta_c}{\beta}}
 \frac{g_{\frac{1}{2}}(e^{-\beta L_{\max}^2})}{g_{\frac{1}{2}}(e^{-\beta_c L_{\max}^2})} >0\ .
\end{equation}
Thus BEC occurs for $T<T_C$, with the relative occupation of the negative
energy states approaching one as $T\to 0$. 
\end{proof}

\section{BEC in Bose gases with contact interactions}
\label{3sec}
We now consider bosons on a graph with many-particle interactions described 
by the quadratic form \eqref{QuadFormContactI}. As explained in 
Section~\ref{subsec:manypart}, these interactions consists of two types: 
they are either located in the vertices of the graph, or they are contact 
interactions. Introducing the latter type of interactions amounts to
generalising the original Lieb-Linger model \cite{LL63} to graphs. 
In the limit of an infinite strength such a model turns into a gas with
hardcore repulsion and can be viewed as a generalisation of a Tonks-Girardeau 
gas to the graph setting. We shall now investigate under what 
circumstances BEC on graphs may or may not occur in the presence of repulsive 
hardcore interactions.
\subsection{Fermi-Bose mapping on general quantum graphs}
Our first goal is to generalise the Fermi-Bose mapping introduced by 
Girardeau~\cite{G60,YG05} to general quantum graphs. This requires us to 
introduce systems of $N$ free fermions, and to relate them to systems of $N$ 
bosons interacting via repulsive hardcore interactions. The fermionic states 
are vectors in the fermionic $N$-particle Hilbert space 
$L^2_F(D_\Gamma^N)=\Pi_F L^2(D_\Gamma^N)$, where 
\begin{equation}
\label{FermonicSymmetry}
 (\Pi_F\Psi)_{e_{1}\dots e_{N}} := \frac{1}{N!}\sum_{\pi \in S_{N}}(-1)^{\sgn\pi}
 \psi_{\pi(e_{1})\dots\pi(e_{N})}(x_{\pi(e_{1})},\dots,x_{\pi(e_{N})})\ .
\end{equation}
An analogous notation is used for Sobolev spaces. Interactions will be 
described in terms of a quadratic form
\begin{equation}
\label{FormFermions}
\begin{split}
 Q^{(N)}_{F}[\Psi] 
  &= N \sum_{e_{1}\dots e_{N}}\int_{0}^{l_{e_{1}}}\dots\int_{0}^{l_{e_{N}}} 
    |\psi_{e_{1}\dots e_{N},x_{e_1}}(x_{e_{1}},\dots,x_{e_{N}})|^{2}\ \ud x_{e_N}\dots
    \ud x_{e_1} \\
  &\quad -N\int_{[0,1]^{N-1}}\langle\Psi_{bv},L_{F,N}(\vy)\Psi_{bv} 
    \rangle_{\kz^{2E^{N}}} \ud\vy 
\end{split}
\end{equation}
with domain
\begin{equation}
\begin{split}
\label{QNformdom}
 \cD_{Q^{(N)}_{F}} = \{\Psi \in H^{1}_{F}(D^{N}_\Gamma);\ P_{F,N}(\vy)
 \Psi_{bv}(\vy)=0\ \text{for a.e.}\ \vy\in [0,1]^{N-1}\}\ .
\end{split}
\end{equation}
Here we use the same notation as in Section~\ref{subsec:manypart},
however with an additional index $F$ indicating the fermionic nature
of the form.
Using the methods of \cite{BKContact}, one can readily show that the form 
\eqref{FormFermions} is closed and semi-bounded, hence it corresponds to a 
unique self-adjoint operator, $(-\Delta_{N},\cD^{N}_{F}(P_{F,N},L_{F,N}))$. 
Furthermore, a standard bracketing argument implies the discreteness of the 
spectrum of this self-adjoint operator as well as a Weyl law for its
eigenvalue asymptotics. We denote the set of all such fermionic $N$-particle
Laplacians by $\mathcal{M}_{F,N}$.

On the other hand, $N$ Bosons with repulsive hardcore interactions are
described in terms of a quadratic form
\begin{equation}
\label{QuadFormHard}
\begin{split}
 Q^{(N)}_{B}[\Psi] 
  &= N \sum_{e_{1}\dots e_{N}}\int_{0}^{l_{e_{1}}}\dots\int_{0}^{l_{e_{N}}} 
    |\psi_{e_{1}\dots e_{N},x_{e_1}}(x_{e_{1}},\dots,x_{e_{N}})|^{2}\ \ud x_{e_N}\dots
    \ud x_{e_1} \\
  &\quad -N\int_{[0,1]^{N-1}}\langle\Psi_{bv},L_{B,N}(\vy)\Psi_{bv} 
    \rangle_{\kz^{2E^{N}}} \ud\vy 
\end{split}
\end{equation}
on the domain
\begin{equation}
\begin{split}
\label{QNformharddomain}
 \cD_{Q^{(N)}_{B}} = \{\Psi \in H^{1}_{0,int,B}(D^{N\ast}_\Gamma);\ P_{B,N}(\vy)
 \Psi_{bv}(\vy)=0\ \text{for a.e.}\ \vy\in [0,1]^{N-1}\}\ ,
\end{split}
\end{equation}
see the paragraph below \eqref{QNformdomain}. We recall that
$H^{1}_{0,int,B}(D^{N\ast}_{\Gamma})\subset H^{1}_{B}(D^{N\ast}_\Gamma)$ consists of the
functions vanishing along the internal boundary $\partial D^{N,int}_\Gamma$ of 
the dissected configuration space. Again, we added an index $B$ to reflect the 
bosonic nature of the form. Consequently, we denote the associated operator by 
$(-\Delta_{N},\cD^{\alpha=\infty}_{B,N}(P_{B,N},L_{B,N}))$, and the set
of all such operators by $\mathcal{M}^{\alpha=\infty}_{B,N}$.
\begin{theorem}
\label{Fermion-boson-mapping_II}
There exists a bijective map 
\begin{equation}
\label{BFmapdef}
 \sigma: \mathcal{M}_{F,N}\rightarrow\mathcal{M}^{\alpha=\infty}_{B,N}\ ,
\end{equation}
such that the operators $(-\Delta_{N},\cD^{N}_{F}(P_{F,N},L_{F,N}))$ and
$\sigma[(-\Delta_{N},\cD^{N}_{F}(P_{F,N},L_{F,N}))]$ are isospectral. This
map can be constructed explicitly.
\end{theorem}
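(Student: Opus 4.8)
The plan is to construct the map $\sigma$ explicitly via a Fermi-Bose transformation that takes a fermionic wavefunction to a bosonic one by multiplying with a suitable sign factor, the natural generalisation of Girardeau's $\psi^{(B)} = |\psi^{(F)}|$. First I would fix, on each hyper-rectangle $D_{e_1\dots e_N}$ with at least two coinciding edge labels, a sign function $\epsilon(x_1,\dots,x_N)$ that equals the sign of $\prod_{i<j,\ e_i=e_j}(x_i-x_j)$ on the dissected domain $D^*_{e_1\dots e_N}$; on hyper-rectangles where all edge labels are distinct we set $\epsilon\equiv 1$. The unitary operator $\mathcal{A}$ on $L^2(D^{N*}_\Gamma)$ acting as multiplication by $\epsilon$ then intertwines the fermionic and bosonic symmetry projectors in the sense that $\mathcal{A}$ maps $\Pi_F$-symmetric functions that vanish on $\partial D^{N,int}_\Gamma$ onto $\Pi_B$-symmetric functions vanishing there. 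The key analytic point is that, because functions in $\cD_{Q^{(N)}_F}$ vanish on the internal boundaries (a fermionic function is antisymmetric, hence vanishes on $x_i=x_j$), multiplication by $\epsilon$ does not create any singularity: $\mathcal{A}\Psi\in H^1$ with $\|(\mathcal{A}\Psi)_{x_{e_1}}\|=\|\Psi_{x_{e_1}}\|$ on each open polyhedral piece, since $\epsilon$ is locally constant away from the internal boundary and the boundary itself has measure zero.

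Next I would track what $\mathcal{A}$ does to the boundary data entering the forms. The vertex-related boundary values \eqref{BVI} and \eqref{BVII} are evaluated at $x_{e_1}\in\{0,l_{e_1}\}$, i.e.\ on the external boundary $\partial D^{N,ext}_\Gamma$, which is disjoint from the internal boundary where $\epsilon$ jumps. Hence there is a fixed $\pm 1$-valued (piecewise constant in $\vy$ and in the block index) diagonal unitary $S$ on $L^2(0,1)\otimes\kz^{2E^N}$ with $(\mathcal{A}\Psi)_{bv}=S\,\Psi_{bv}$ and $(\mathcal{A}\Psi)'_{bv}=S\,\Psi'_{bv}$. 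Therefore $\mathcal{A}$ carries the fermionic form $Q^{(N)}_F$ with data $(P_{F,N},L_{F,N})$ to the bosonic hardcore form $Q^{(N)}_B$ with data $P_{B,N}(\vy)=S(\vy)P_{F,N}(\vy)S(\vy)$ and $L_{B,N}(\vy)=S(\vy)L_{F,N}(\vy)S(\vy)$; these are again a.e.\ an orthogonal projector and a self-adjoint endomorphism of its kernel, so the image lies in $\mathcal{M}^{\alpha=\infty}_{B,N}$. This defines $\sigma$, and since conjugation by the involution $S$ is its own inverse the same recipe run backwards (multiply a bosonic hardcore state by $\epsilon$ to get a fermionic state, which automatically vanishes on the internal boundary and acquires the correct antisymmetry) produces the inverse map, so $\sigma$ is bijective.

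Finally, isospectrality is immediate once the form-level intertwining is in place: $\mathcal{A}$ is a unitary bijection from $\cD_{Q^{(N)}_F}$ onto $\cD_{Q^{(N)}_B}$ with $Q^{(N)}_B[\mathcal{A}\Psi]=Q^{(N)}_F[\Psi]$, hence the two closed semibounded forms are unitarily equivalent, and so are the self-adjoint operators they generate; in particular they have the same spectrum (including multiplicities). I would also remark that $\mathcal{A}$ maps eigenfunctions to eigenfunctions via $\Psi^{(B)}=\epsilon\,\Psi^{(F)}$, recovering the classical relation \eqref{bosfermeigen} for the ground state on a single edge where $\epsilon=\sgn\prod_{i<j}(x_i-x_j)$ and $\Psi^{(F)}_0$ is the Slater-type ground state.

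I expect the main obstacle to be the careful bookkeeping of the sign function across the block structure: one must check that on a hyper-rectangle with several repeated edge labels the locally constant choice of $\epsilon$ on each polyhedral cell of $D^*_{e_1\dots e_N}$ is globally consistent with the permutation symmetry (so that $\mathcal{A}$ really intertwines $\Pi_F$ and $\Pi_B$ rather than some twisted symmetry), and that the induced conjugating operator $S$ on the space of boundary values is well defined and independent of which cell one approaches the external boundary from. Once this combinatorial compatibility is verified, the functional-analytic content — preservation of $H^1$-regularity, of the vanishing condition on $\partial D^{N,int}_\Gamma$, and of the form values — is routine, exactly as in \cite{BKContact}.
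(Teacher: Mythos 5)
Your proposal is correct and follows essentially the same route as the paper: your multiplication operator $\mathcal{A}$ by the sign of $\prod_{i<j,\,e_i=e_j}(x_i-x_j)$ is exactly the paper's map $T_\sigma$ (defined there by restricting the fermionic component to the ordered sector and extending by bosonic symmetry), your diagonal involution $S$ on the boundary values is the paper's $\Sigma(\vy)$, and both arguments conclude by conjugating $(P_{F,N},L_{F,N})$ with this involution and observing that the quadratic forms agree, whence isospectrality. Your explicit remark that the vanishing of fermionic states on the internal boundary is what keeps $\mathcal{A}\Psi$ in $H^1$ is a point the paper uses implicitly, so nothing is missing.
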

\begin{proof}
In order to introduce the Fermi-Bose map \eqref{BFmapdef} we first note that
the fermionic symmetry implies the vanishing of functions in 
$H^1_F(D^{N}_\Gamma)$ along the internal boundary $\partial D^{N,int}_\Gamma$, 
see \eqref{intbound}. Therefore, we seek to relate $H^1_F(D^{N}_\Gamma)$ 
to the bosonic Sobolev space $H^1_{0,int,B}(D^{N\ast}_\Gamma)$. More explicitly, 
we define a bijective map 
\begin{equation}
 T_\sigma:H^{1}_{F}(D^{N}_\Gamma) \rightarrow H^{1}_{0,int,B}(D^{N\ast}_\Gamma)
\end{equation}
as follows: Let $\Phi_{F}=(\varphi^{F}_{e_{1}\dots e_{N}})\in H^{1}_{F}(D^{N}_\Gamma)$,
and divide the components into classes such that representatives have the
same set of edge indices $e_{1},\dots,e_{N}$, up to permutations. Given a fixed 
representative $\varphi^{F}_{e_{1}\dots e_{N}}$, let $n(e)$ be the number of times an
edge $e\in\mathcal{E}$ occurs among the edge indices and introduce (particle)
labels $\zeta(1),\dots,\zeta(n(e))$. Then define a subdomain 
$\Omega\subset D_{e_1\dots e_N}$ such that all 
$x =(x^{\zeta(1)}_e ,\dots, x^{\zeta(n(e))}_e)\in\Omega$ fulfil
\begin{equation}
x^{\zeta(1)}_{e} < \dots < x^{\zeta(n(e))}_{e}\ .
\end{equation}
This is used to define the component $\varphi^{B}_{e_1\dots e_N}$ of a bosonic 
state by setting
\begin{equation}
\label{bosdef}
 \varphi^{B}_{e_1\dots e_N}(x):=\varphi^{F}_{e_{1}\dots e_{N}}(x)
\end{equation}
for $x\in\Omega$, and extending this to all of $D_{e_{1}\dots e_{N}}$ using the
bosonic symmetry. Finally, by permuting the edge indices of 
$\varphi^{B}_{e_{1}\dots e_{N}}$ and assigning the same values \eqref{bosdef} to
each representative we obtain all other components, defining a symmetric 
function $\Phi_{B}=T_\sigma(\Phi_F)\in H^{1}_{0,int,B}(D^{N\ast}_\Gamma)$. This 
construction can be reversed in an obvious way so that the map $T_\sigma$ is 
invertible.

Based on the map $T_\sigma$ we introduce a diagonal matrix $\Sigma(\vy)$ 
with non-vanishing entries $\Sigma_{ee}(\vy)\in\{1,-1\}$ that take account 
of the possible sign changes introduced by $\sigma$. Note that this matrix
is such that $\Sigma(\vy)^2=\eins$. More explicitly, if 
$\Phi \in \cD_{Q^{(N)}_{F}}$ is a function with boundary values $\Phi_{bv}$, the 
function $T_\sigma(\Phi)\in H^{1}_{0,int,B}(D^{N\ast}_\Gamma)$ has boundary values
\begin{equation}
 [T_\sigma(\Phi)]_{bv}(\vy)=\Sigma(\vy)\Phi_{bv}(\vy)\ .
\end{equation}
Furthermore, we set
\begin{equation}
\label{MappingMatrizen}
 P^{\sigma}_{N}(\vy)=\Sigma(\vy) P_{N}(\vy)\Sigma(\vy)\ \qquad\text{and}\qquad
 L^{\sigma}_{N}(\vy)=\Sigma(\vy)L_{N}(\vy)\Sigma(\vy)\ ,
\end{equation}
which are, for every $\vy\in[0,1]^{N-1}$, projectors and self-adjoint maps on
$\ker P^{\sigma}_{N}$, respectively. Hence, given a fermionic quadratic form 
$(Q^{(N)}_{F},\cD_{Q^{(N)}_{F}})$ as in \eqref{FormFermions} and \eqref{QNformdom}, 
we can associate to it a unique bosonic form via
\begin{equation}
\label{EqualityForms}
\begin{split}
 Q^{(N)}_{F}[\Phi]
   &= \sum_{e_{1}\dots e_{N}}\int_{0}^{l_{e_{1}}}\dots\int_{0}^{l_{e_{N}}}|\nabla
      \varphi_{e_{1}\dots e_{N}}|^{2} \ \ud x^{1}_{e_{1}}\dots\ud x^{N}_{e_{N}}\\
   &\qquad -N \int_{[0,1]^{N-1}} \langle \Phi_{bv},L_{N}(\vy)\Phi_{bv}
      \rangle_{\kz^{2E^{N}}} \ \ud\vy \\
   &= \sum_{e_{1}\dots e_{N}}\int_{0}^{l_{e_{1}}}\dots\int_{0}^{l_{e_{N}}}|\nabla
      T_\sigma(\varphi)_{e_{1}\dots e_{N}}|^{2} \ \ud x^{1}_{e_{1}}\dots\ud x^{N}_{e_{N}} \\
   &\qquad -N \int_{[0,1]^{N-1}} \langle [T_\sigma(\Phi)]_{bv},L^{\sigma}_{N}(\vy)
      [T_\sigma(\Phi)]_{bv}\rangle_{\kz^{2E^{N}}} \ \ud \vy \\
   &= Q^{(N)}_{B}[T_\sigma(\Phi)]
\end{split}
\end{equation}
on the domain
\begin{equation}
\label{FermBosdom}
 T_\sigma(\cD_{Q^{(N)}_F})=\{\Phi\in H^{1}_{0,int,B}(D^{N\ast}_{\Gamma});\ 
    P^{\sigma}_{N}(\vy)\Phi_{bv}(\vy)=0 \ \text{for a.e.} \ \vy 
    \in [0,1]^{N-1}\} \ .
\end{equation}
The associated self-adjoint operator is of hardcore type, see 
\eqref{QuadFormHard} and \eqref{QNformharddomain}.

The above explicit contruction defines the Fermi-Bose map \eqref{BFmapdef}
through 
\begin{equation}
 \sigma[(-\Delta_{N},\cD^{N}_{F}(P_{F,N},L_{F,N}))]=
  (-\Delta_{N},\cD^{\alpha=\infty}_{B,	N}(P^{\sigma}_{N},L^{\sigma}_{N})) \ .
\end{equation}
By construction it is bijective, and due to \eqref{EqualityForms} the operators 
are isospectral.
\end{proof}
\subsection{Bose-Einstein condensation in a gas of bosons interacting via 
repulsive hardcore interactions}
In a second step we use the Fermi-Bose mapping established in 
Theorem~\ref{Fermion-boson-mapping_II} in order to study BEC in a system of 
particles interacting via repulsive hardcore interactions. 
With this goal in mind we first consider the fermionic realisations of the 
$N$-particle Laplacian described in the previous Subsection and compare their 
free-energy densities \eqref{FreeEnergyDensity} with those of free fermion 
gases. For the latter we choose two comparison operators. 

The first reference model is that of free fermions with Dirichlet boundary
conditions in the vertices. For every $N\in\nz$ we hence choose 
$P_{F,N}^D=\eins_{2E^{N}}$ and $L_{F,N}^D=0$ with corresponding operator
$(-\Delta_{N}, \cD^{N}_{F}(\eins_{2E^{N}},0))$. This is a textbook example
(see, e.g., \cite{SchwablSM}) for which the free-energy density 
\eqref{FreeEnergyDensity} is well known to be
\begin{equation}
\label{SequenceDirichlet}
\begin{split}
 f_{F,D}(\beta,\mu) 
  &= -\lim_{TL}\frac{1}{\mathcal{\beta L}}\sum_{n=0}^{\infty}\log{\left(
     1+e^{-\beta(k^{2}_{n}-\mu)}\right)} \\
  &= -\frac{1}{\pi \beta}\int_{0}^{\infty}\log{\left(1+e^{-\beta(k^{2}-\mu)}\right)}
     \ \ud k\ .
\end{split}
\end{equation}
Here $\{k^{2}_{n}\}_{n \in \nz_{0}}$ are the one-particle eigenvalues. Note that 
this function is smooth, 
$f_{F,D}\in C^{\infty}\left((0,\infty) \times \rz \right)$, hence there is no 
phase transition in a gas of free fermions.

The second reference model also describes free fermions, however with
standard Robin boundary conditions in the vertices. Here $P_{F,N}^R=0$ and 
$L_{F,N}^R=M\eins_{2E^{N}}$, and the corresponding operator is 
$(-\Delta_{N}, \cD^{N}_{F}(0,M\eins_{2E^{N}}))$, where $M>0$ is a suitable
constant.
\begin{prop}
\label{Fermions} 
Let $(-\Delta_{N},\cD^{N}_{F}(P_{F,N},L_{F,N}))_{N\in\nz}$ be a family of fermionic 
Laplacians indexed by the particle number $N$ as introduced above. Assume 
that for this family there exists $M>0$ such that
\begin{equation}
\label{ConditionProof}
 \|\Lambda_{F,N}\|_{op} \leq M\ , \qquad \forall N \ .
\end{equation}
Then the grand-canonical free-energy density $f_{F}(\beta,\mu)$ coincides
with the free-energy density \eqref{SequenceDirichlet} of free fermions with 
Dirichlet boundary conditions in the vertices.
\end{prop}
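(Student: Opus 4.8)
The plan is to sandwich the fermionic free-energy density $f_F(\beta,\mu)$ between those of the two reference models and show the bounds coincide. First I would establish, by a Dirichlet--Neumann bracketing argument applied to the quadratic forms \eqref{FormFermions}, that the eigenvalue counting function $\mathcal{N}_{F,N}(K)$ of $(-\Delta_N,\cD^N_F(P_{F,N},L_{F,N}))$ obeys
\begin{equation}
 \mathcal{N}_{F,N}^{D}(K) \leq \mathcal{N}_{F,N}(K) \leq \mathcal{N}_{F,N}^{R}(K)\ ,
\end{equation}
where the superscripts refer to the two reference operators $(-\Delta_N,\cD^N_F(\eins_{2E^N},0))$ and $(-\Delta_N,\cD^N_F(0,M\eins_{2E^N}))$. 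The left inequality uses that enlarging the form domain (dropping the projector constraint $P_{F,N}\Psi_{bv}=0$) and dropping the non-negative boundary term can only lower eigenvalues; the right inequality uses precisely the hypothesis \eqref{ConditionProof}, namely $\langle\Psi_{bv},L_{F,N}(\vy)\Psi_{bv}\rangle \leq M\|\Psi_{bv}\|^2$, so that $Q^{(N)}_F[\Psi]\geq Q^{(N)}_{F,R}[\Psi]$ on the larger Robin domain. The bracketing then transfers to the grand-canonical partition functions and hence to the free-energy densities at finite volume: $f^R_{F,V}(\beta,\mu)\leq f_{F,V}(\beta,\mu)\leq f^D_{F,V}(\beta,\mu)$ (the order reversing because $f$ carries a minus sign and more states means larger $Z$).

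\textbf{Collapsing the bounds in the TL.} The second and central step is to prove that the Dirichlet and Robin reference models have the \emph{same} free-energy density in the thermodynamical limit, i.e.\ $f_{F,R}(\beta,\mu)=f_{F,D}(\beta,\mu)$. Both are exactly solvable: the $N$-particle eigenvalues are sums of one-particle eigenvalues, so the grand-canonical free-energy density reduces to the one-particle expression $-\frac{1}{\beta\mathcal{L}}\sum_n\log(1+e^{-\beta(k_n^2-\mu)})$. For Dirichlet this gives \eqref{SequenceDirichlet} directly. For the Robin model the one-particle spectrum consists of the $E$ decoupled edge problems with Robin constant $M$ at both endpoints; the eigenvalue counting function differs from the Dirichlet one only by a bounded term (plus at most finitely many negative eigenvalues whose contribution is $O(1)$ as $\eta\to\infty$, since those eigenvalues converge to $-M^2$ and contribute a fixed finite amount), and the Weyl term is identical. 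Dividing by $\mathcal{L}=\eta\sum_e l_e\to\infty$ kills the $O(1)$ discrepancy, so both limits equal $-\frac{1}{\pi\beta}\int_0^\infty\log(1+e^{-\beta(k^2-\mu)})\,\ud k$. Squeezing then forces $f_F(\beta,\mu)=f_{F,D}(\beta,\mu)$.

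\textbf{Main obstacle.} The routine part is the bracketing inequality; the delicate point is making the ``$O(1)$ corrections are killed in the TL'' argument uniform enough that it survives taking $\lim_{TL}$ \emph{before} any limit in $N$ is relevant — here it helps that the free-energy density is already a per-volume, $N$-summed quantity, so one only needs control at the one-particle level, and the trace-formula estimates used in the proof of Theorem~\ref{BECI} (the exponentially small off-diagonal terms, the subleading $s(k)$ term) give exactly the needed uniformity. The one genuinely load-bearing hypothesis is \eqref{ConditionProof}: without the uniform bound $\|\Lambda_{F,N}\|_{op}\leq M$ the Robin comparison constant would grow with $N$, the reference spectrum would drift, and the squeeze would collapse. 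So the proof is essentially: (i) form-domain monotonicity plus \eqref{ConditionProof} give two-sided bracketing of $\mathcal{N}_{F,N}$; (ii) pass to free-energy densities; (iii) compute both reference densities explicitly and observe they agree in the TL because the discrepancy is $O(1)$ while the volume diverges; (iv) conclude by the squeeze.
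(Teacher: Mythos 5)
Your proposal is correct and follows essentially the same route as the paper: a min-max/bracketing argument squeezing $f_F$ between the Dirichlet and Robin reference models (with \eqref{ConditionProof} supplying the uniform Robin comparison), followed by showing the two reference densities coincide in the thermodynamical limit because the finitely many negative eigenvalues contribute $O(1)$ while the Weyl term, extracted via the trace formula as in Theorem~\ref{BECI}, dominates. No substantive differences from the paper's argument.
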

\begin{proof} 
Using the min-max principle \cite{ReeSim78} in the same way is in 
Proposition~\ref{NonInter}, we conclude that 
\begin{equation}
\label{EquationProof}
 f^{\mathcal{L}}_{F,R}(\beta,\mu) \leq f^{\mathcal{L}}_{F}(\beta,\mu) \leq 
 f^{\mathcal{L}}_{F,D}(\beta,\mu)
\end{equation}
holds for any family of the fermionic Laplacians that we allow. Since 
$f^{\mathcal{L}}_{F,R}(\beta,\mu)$ is the free-energy density of a free fermion 
gas, it can be reduced to 
\begin{equation}
\label{EquationProofFreeEnergy}
 f^{\mathcal{L}}_{F,R}(\beta,\mu) = 
 -\frac{1}{\mathcal{\beta L}}\sum_{K^{2}_{n} \leq 0}\log{\left(
 1+e^{-\beta(K^{2}_{n}-\mu)}\right)} - 
 \frac{1}{\mathcal{\beta L}}\sum_{K^{2}_{n} > 0}\log{\left(
 1+e^{-\beta(K^{2}_{n}-\mu)}\right)}\ ,
\end{equation}
where $\{K^{2}_{n}\}_{n \in \nz_{0}}$ are the one-particle eigenvalues.

The number of negative eigenvalues of the one-particle Laplacian is 
finite so that the first term on the right-hand side of 
\eqref{EquationProofFreeEnergy} does not contribute in the TL. The second 
term can be evaluated using the trace formula in the same way as in the proof 
of Theorem~\ref{BECI}. This then gives
\begin{equation}
 \lim_{TL}f^{\mathcal{L}}_{F,R}(\beta,\mu)=f_{F,D}(\beta,\mu)\ .
\end{equation}
Together with the bracketing \eqref{EquationProof} this completes the proof.
\end{proof}
\begin{rem} Note that condition \eqref{ConditionProof} can be understood as 
a stability condition for the interaction potential similar to what is 
often required in statistical mechanics \cite{LiebBound,Ruelle}. More 
precisely, one requires a self-adjoint $N$-particle Hamiltonian $\hat{H}_{N}$ 
to be bounded from below by $-NB$ where $B \geq 0$ is some constant. 
\end{rem}
Finally, we can state the main results of this section. 
\begin{theorem}
\label{TheoremBEC}
Let $(-\Delta_{N},\cD^{\alpha=\infty}_{N}(P_{B,N},L_{B,N}))_{N\in\nz}$ be a family of 
bosonic Laplacians with repulsive hardcore interactions, indexed by the 
particle number $N$. Assume that for this family there exists $M>0$ such that
\begin{equation}
 \|\Lambda_{B,N}\|_{op} \leq M\ , \qquad \forall N \ .
\end{equation}
Then the associated bosonic grand-canonical free-energy density 
$f_{B}(\beta,\mu)$ coincides with the free-energy density 
\eqref{SequenceDirichlet} of free fermions with Dirichlet boundary conditions 
in the vertices,
\begin{equation}
 f_{B}(\beta,\mu) = -\frac{1}{\pi \beta}\int_{0}^{\infty}\log{\left(
 1+e^{-\beta(k^{2}-\mu)}\right)}\ \ud k\ .
\end{equation}
This function is smooth and, hence, there occurs no phase transition.
\end{theorem}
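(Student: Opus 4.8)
The plan is to combine the two main tools established earlier in the paper: the Fermi-Bose map of Theorem~\ref{Fermion-boson-mapping_II} and the free-energy computation of Proposition~\ref{Fermions}. First I would take the given family $(-\Delta_{N},\cD^{\alpha=\infty}_{N}(P_{B,N},L_{B,N}))_{N\in\nz}$ of bosonic hardcore Laplacians and apply the \emph{inverse} of the bijection $\sigma$ from Theorem~\ref{Fermion-boson-mapping_II}. This produces, for each $N$, a fermionic Laplacian $(-\Delta_{N},\cD^{N}_{F}(P_{F,N},L_{F,N}))=\sigma^{-1}[(-\Delta_{N},\cD^{\alpha=\infty}_{N}(P_{B,N},L_{B,N}))]$ that is isospectral to the bosonic one. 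Since the grand-canonical free-energy density depends only on the eigenvalues of the $N$-particle operators (through $Z_N(\beta)$), isospectrality immediately gives $f_{B}(\beta,\mu)=f_{F}(\beta,\mu)$, where $f_{F}$ is the free-energy density of the fermionic family.

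Next I would verify that this fermionic family satisfies the hypothesis of Proposition~\ref{Fermions}, namely a uniform bound $\|\Lambda_{F,N}\|_{\op}\leq M$ for all $N$. This is where the structure of the map $\sigma$ matters: by \eqref{MappingMatrizen} we have $L_{F,N}(\vy)=\Sigma(\vy)L_{B,N}(\vy)\Sigma(\vy)$ with $\Sigma(\vy)^2=\eins$, so $\Sigma(\vy)$ is a (real, diagonal, $\pm1$) orthogonal matrix for a.e.\ $\vy$; conjugation by an orthogonal matrix preserves operator norms, hence $\|\Lambda_{F,N}\|_{\op}=\|\Lambda_{B,N}\|_{\op}\leq M$ for all $N$ by assumption. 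Thus Proposition~\ref{Fermions} applies and yields $f_{F}(\beta,\mu)=f_{F,D}(\beta,\mu)$, the free-energy density \eqref{SequenceDirichlet} of free fermions with Dirichlet vertex conditions. Chaining the two equalities gives $f_{B}(\beta,\mu)=f_{F,D}(\beta,\mu)=-\frac{1}{\pi\beta}\int_0^\infty\log(1+e^{-\beta(k^2-\mu)})\,\ud k$.

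Finally, I would observe that this explicit integral is a smooth function of $(\beta,\mu)\in(0,\infty)\times\rz$ — one differentiates under the integral sign, the integrand and all its $\beta$- and $\mu$-derivatives being dominated by integrable functions of $k$ uniformly on compact sets — so $f_{B}\in C^{\infty}((0,\infty)\times\rz)$. Since a phase transition was defined as a point of non-differentiability of a limiting thermodynamical function, smoothness of $f_B$ rules out any phase transition, which (by the paper's stated point of view) indicates the absence of BEC even when the underlying free Bose gas exhibited it.

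The main obstacle I anticipate is not any hard estimate but making sure the logical bookkeeping with $\sigma$ is airtight: one must check that $\sigma$ is genuinely a \emph{bijection} onto the full set $\mathcal{M}^{\alpha=\infty}_{B,N}$ (so that every admissible hardcore family really does arise from a fermionic family), and that the multiplication operator $\Lambda_{F,N}$ associated with $L_{F,N}$ via the prescription of Section~\ref{subsec:manypart} has its operator norm controlled by the $\vy$-essential-supremum of $\|L_{F,N}(\vy)\|$ exactly as $\Lambda_{B,N}$ does — both of which are already guaranteed by Theorem~\ref{Fermion-boson-mapping_II} and the construction in \cite{BKContact}, so the argument reduces to correctly invoking them rather than reproving anything.
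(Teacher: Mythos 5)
Your proposal is correct and follows essentially the same route as the paper's proof: pull the bosonic family back through the inverse Fermi--Bose map of Theorem~\ref{Fermion-boson-mapping_II}, use isospectrality to identify the free-energy densities, and invoke Proposition~\ref{Fermions}. Your explicit check that $\|\Lambda_{F,N}\|_{\op}=\|\Lambda_{B,N}\|_{\op}$ via conjugation by the unitary $\Sigma(\vy)$ is a detail the paper leaves implicit, and it is a welcome addition.
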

\begin{proof}
Using the inverse Fermi-Bose map as described in
Theorem~\ref{Fermion-boson-mapping_II}, we associate to the family 
$(-\Delta_{N},\cD^{\alpha=\infty}_{N}(P_{B,N},L_{B,N}))_{N\in\nz}$ of bosonic Laplacians
an isospectral family of fermionic Laplacians. According to 
Proposition~\ref{Fermions} the resulting family of fermionic Laplacians has a 
free-energy density $f_{F}(\beta,\mu)=f_{F,D}(\beta,\mu)$, and due to the 
isospectrality with the bosonic family one immediatley finds that
$f_{B}(\beta,\mu)=f_{F,D}(\beta,\mu)$.
\end{proof}
\begin{rem} Theorem~\ref{TheoremBEC} can be regarded as a quantum 
statistical version of the Theorem by van Hove \cite{vanHove}.
\end{rem}

\vspace*{0.5cm}
\subsection*{Acknowledgements}
We acknowledge support by the EPSRC network {\it Analysis on Graphs}
(EP/1038217/1). J~K would like to thank the {\it Evangelisches Studienwerk 
Villigst} for financial support through a Promotionsstipendium.

\vspace*{0.5cm}
{\small
\bibliographystyle{amsalpha}
\bibliography{Literature}}

\end{document}